\documentclass[a4paper,10pt]{article}
\usepackage[margin=25mm]{geometry}
\usepackage{amsmath,amssymb,amsthm,mathtools}
\usepackage{mathrsfs}
\usepackage[authoryear,round]{natbib}
\usepackage{booktabs,longtable,array}
\usepackage{graphicx}
\usepackage{placeins}
\usepackage[colorlinks=true,linkcolor=blue,citecolor=blue,urlcolor=blue]{hyperref}
\usepackage{pgfplots}
\pgfplotsset{compat=1.17}
\usepackage{enumitem}
\setlist{topsep=4pt,itemsep=3pt}
\numberwithin{equation}{section}
\newtheorem{theorem}{Theorem}[section]
\newtheorem{proposition}[theorem]{Proposition}
\newtheorem{corollary}[theorem]{Corollary}
\theoremstyle{definition}

\newtheorem{assumption}[theorem]{Assumption}
\newtheorem{example}[theorem]{Example}
\theoremstyle{definition}
\newtheorem{remark}[theorem]{Remark}

\def\ve{{\varepsilon}}

\def\l{\left}
\def\r{\right}

\def\b{\beta}

\def\th{\theta}

\def\s{\sigma}

\def\k{\kappa}
\def\la{\lambda}

\def\vp{\varphi}

\def\p{\partial}

\def\A{\mathcal{A}}

\def\P{{\mathbb{P}}}

\def\E{\mathbb{E}}

\def\I{\mathbf{1}}
\def\df{\mathrm{d}}

\def\ol#1{\overline{#1}}
\def\ul#1{\underline{#1}}

\title{Generalized survival energy models with exogenous covariates for mortality prediction}
\author{Yasutaka Shimizu\thanks{Corresponding author. Email: \href{mailto:shimizu@waseda.jp}{shimizu@waseda.jp}.}\\Department of Applied Mathematics, Waseda University\\3-4-1, Okubo, Shinjuku-ku, Tokyo 169-8555, Japan}
\date{}
\begin{document}
\maketitle
\begin{abstract}
We develop a generalized survival energy model that relates mortality to social and economic conditions through a latent energy process. Death occurs when this process first reaches or crosses zero. Environmental covariates enter the coefficients governing depletion, recovery, fluctuations and accumulated damage, yielding lifetime distributions conditional on the environmental path. We derive these distributions for diffusion and inverse Gaussian specifications and describe their numerical evaluation. An application to Japanese male and female mortality data assesses predictive performance against Cairns--Blake--Dowd benchmarks. Conditional on realized environmental paths, the selected inverse Gaussian models achieve small prediction errors comparable to those of the benchmarks. The framework also provides a model-based interpretation of how environmental changes affect mortality through the rate of damage accumulation or the magnitude of energy fluctuations. These interpretations describe conditional statistical relationships rather than causal effects. The resulting models offer a way to study mortality under alternative environmental scenarios while retaining a connection to classical actuarial survival models.
\end{abstract}
\noindent\textbf{Keywords:} Mortality modelling; Survival energy; First-passage time; Environmental covariates; Inverse Gaussian process; Longevity risk

\section{Introduction}\label{sec:randominitial}
Mortality can be modelled either directly through age-specific death rates or indirectly through the evolution of an individual's latent vitality. \citet{SMI2021} proposed the survival energy model (SEM), in which death occurs when a latent stock of survival energy is exhausted. Survival energy is a modelling construct that represents depletion, recovery and individual randomness, rather than a directly observed biological quantity. Its stochastic evolution determines the lifetime distribution through the first passage to zero or below.

The interpretation of mortality in terms of declining vitality has a long history in actuarial science. \citet{Gompertz1825} derived an exponentially increasing force of mortality from the idea that resistance to death declines at a constant proportional rate. \citet{Minami1987} discussed genetically programmed ageing and cumulative damage theories, and developed a mortality model based on the latter. \citet{ZZW2026} review vitality-based mortality models. The survival energy hypothesis can be viewed as part of this broader tradition of explaining lifetimes through the depletion of vitality or the accumulation of damage.

\subsection{A generalized survival energy model}
To extend the SEM of \citet{SMI2021}, we distinguish initial energy, systematic depletion or recovery, continuous fluctuations and accumulated damage. \citet{ZZW2026,ZZW2024} also consider a general framework combining initial vitality, depletion, diffusion and jumps. Building on these ideas, we introduce environmental dependence through the generalized survival energy model (GSEM)
\begin{equation}
 dX_t^c=\left\{-\frac{\mu_0(t)}{\rho}+a(t,Z_t^c)\right\}dt
       +\sigma(t,Z_t^c)\,dW_t^c-dD_t^c,
 \qquad X_0^c=\xi^c\sim\operatorname{Exp}(\rho)
 \label{GSEM}
\end{equation}
A cohort consists of individuals of the same sex and birth year. The sexes are modelled separately; $c$ indexes birth year, $t$ denotes age, and $Z_t^c\in\mathbb R^d$ collects environmental covariates. Here $\mu_0(t)\ge0$ is a baseline force of mortality, $a$ adjusts depletion and may allow recovery, $\sigma$ controls fluctuations driven by a standard Brownian motion $W^c$, and $D^c$ is a nondecreasing damage process. The parameter $\rho>0$ is the rate of the exponential initial energy distribution.

Define the lifetime by $\tau_c=\inf\{t\ge0:X_t^c\le0\}$, the cumulative probability of death by $F_c(t)=\Pr(\tau_c\le t)$, and the survival function by $S_c(t)=1-F_c(t)$. We refer to $F_c$ as the mortality distribution function (CDF); it corresponds to the mortality function of \citet{SMI2021}. The framework encompasses inhomogeneous diffusion (ID) models and inverse Gaussian (IG) damage models~\citep{SSKMI2023,MS2024}, as well as specifications with fixed initial energy. Section~3 develops these models and their lifetime distributions.

\subsection{Random initial energy and classical survival models}
Random initial energy represents individual heterogeneity and establishes a connection with classical mortality models. Setting $a=\sigma=0$ and $D^c\equiv0$ in \eqref{GSEM} gives
\[
 X_t^c=\xi^c-\frac1\rho\int_0^t\mu_0(s)\,ds
\]
If $\mu_0$ is nonnegative and locally integrable, energy cannot recover. The exponential tail therefore yields
\begin{equation}
 S_c(t)=\Pr\!\left(\xi^c>\frac1\rho\int_0^t\mu_0(s)\,ds\right)
       =\exp\!\left\{-\int_0^t\mu_0(s)\,ds\right\}
\end{equation}
This is the usual survival function associated with the force of mortality $\mu_0$. For example, $\mu_0(t)=m+ge^{kt}$ gives the Gompertz--Makeham law. Thus GSEM retains the classical survival representation while allowing fluctuations, accumulated damage and environmental effects.

\subsection{Environmental covariates and scope of the study}
Age $t$ in cohort $c$ corresponds to calendar year $c+t$. Annual GDP and physician density, for example, can be collected in a series $E_u$ and incorporated as $Z_t^c=E_{c+t}$. We distinguish the shared environment from individual randomness by assuming that the environmental process is independent of the underlying individual random inputs. Initial energy is also independent of the environment and the driving noises. Individuals within a cohort share the same model law. Environmental paths across cohorts are not assumed independent, since they arise from common calendar-time series.

Conditional on an environmental path $z$, coefficients such as $a_z(t)=a(t,z(t))$ and $\sigma_z(t)=\sigma(t,z(t))$ determine the lifetime distribution $F(t\mid z)$. The damage process may also depend on the environment. This specification links changes in mortality to changes in depletion, recovery, fluctuations or damage accumulation. Exogenous covariates have been used in mortality prediction before~\citep{NiuMelenberg2014,Dimai2025}; our focus is on representing their effects through the dynamics of survival energy.

We use the term \emph{structural model} for GSEM, in contrast to \emph{reduced-form models} that specify mortality directly. We fit fixed-initial-energy IG and ID models to Japanese cohort mortality data for each sex and compare them with the CBD-family models M6 and M7~\citep{Cairns2009}. In an evaluation conditional on realized environmental paths, the IG models achieve predictive accuracy comparable to these benchmarks. We also examine alternative environmental paths to interpret their model-implied mortality effects. These calculations are not estimates of causal effects; they assess the framework in terms of both predictive performance and interpretability.

Section~2 describes potential environmental data sources. Section~3 derives lifetime distributions, and Section~4 specifies the empirical models and estimation methods. Section~5 reports predictive performance and environmental scenario results. Section~6 concludes.

\section{Publicly available environmental covariates}

Access to medical care, economic security and exposure to pollution or extreme temperatures are potential indicators of the environment relevant to survival energy. We summarize public data sources for Japan and the main issues in their use. World Bank coverage was checked on 29 September 2026. Section~\ref{sec:paperig} specifies the two covariates used in the empirical analysis and the treatment of missing observations.

\begin{longtable}{p{0.19\textwidth}p{0.34\textwidth}p{0.39\textwidth}}
\toprule
Domain & Interpretation & Sources and limitations\\\midrule
\endhead
Medical resources and access & Physician and nurse density, access to care and emergency response times. & Physician density is available from the World Bank; OECD and e-Stat provide additional workforce, coverage and prefectural data~\citep{WBdata,OECDhealth,ESTATdoctors}. Resource quantity is not equivalent to quality; more beds need not imply lower mortality.\\
Economic conditions & Real income per capita, unemployment, poverty, inequality and income volatility. & Real GDP per capita at purchasing power parity (PPP) and unemployment are available from the World Bank~\citep{WBdata}. GDP does not measure household income or its distribution. Consistent long-run poverty and inequality series were not verified for this study.\\
Air quality and climate & PM2.5 exposure and extreme heat or cold. & PM2.5 is available from the World Bank; station-level weather data are available from the Japan Meteorological Agency~\citep{WBdata,JMAdata}. National exposure differs from individual exposure. Station relocation and missing values require attention.\\
Sanitation and housing & Safe sanitation, insulation, heating, cooling and crowding. & Safely managed sanitation coverage is available from the World Bank~\citep{WBdata}, but its high level and limited variation in Japan may impede estimation. Comparable long-run housing series were not verified.\\
Health expenditure & Per-capita expenditure and public financing as proxies for available services. & PPP-adjusted expenditure is available from the World Bank~\citep{WBdata}. Expenditure may increase in response to morbidity. The retrieved series uses current international dollars and requires price adjustment for long-run comparisons.\\
Lifestyle and public health & Smoking, nutrition, exercise, vaccination and preventive care. & OECD and the National Health and Nutrition Survey provide relevant data~\citep{OECDhealth,NHNS}. Age, sex and year coverage must be aligned. Behaviour and health status may make these variables endogenous.\\
Social support and institutions & Isolation, family support, long-term care access and employment or social protection. & These remain conceptual candidates: comparable long-run data for the study cohorts were not verified. Overlap with income and access-to-care measures should be avoided.\\
\bottomrule
\end{longtable}

\begin{remark}[Covariate selection]\label{rem:covariates}
The components of $Z$ should be selected on substantive grounds and with adequate temporal coverage. Physician density may proxy access to care, income may reflect economic security, and pollution may indicate physical exposure. Including several closely related indicators can make their separate effects difficult to identify.
\end{remark}

\begin{example}[Annual series available for Japan]\label{ex:dataavailability}
The following Japanese series were retrieved from the World Bank~\citep{WBdata}. Coverage gives the earliest and latest retrieved years, and the final column counts available annual observations. Coverage need not be complete: physician density, for example, has only 39 observations over 1960--2022. Retrieval dates and data versions should be retained because coverage and values are revised.
\begin{center}
\small
\begin{tabular}{p{0.32\textwidth}p{0.32\textwidth}rr}
\toprule
Indicator & API code & Coverage & Count\\\midrule
Physicians per 1,000 people & \texttt{SH.MED.PHYS.ZS} & 1960--2022 & 39\\
Health expenditure per capita (PPP, current prices) & \texttt{SH.XPD.CHEX.PP.CD} & 2000--2023 & 24\\
GDP per capita (PPP, constant 2021 prices) & \texttt{NY.GDP.PCAP.PP.KD} & 1990--2025 & 36\\
Mean annual PM2.5 exposure & \texttt{EN.ATM.PM25.MC.M3} & 1990--2023 & 34\\
Unemployment (ILO modelled estimate) & \texttt{SL.UEM.TOTL.ZS} & 1991--2025 & 35\\
Safely managed sanitation coverage & \texttt{SH.STA.SMSS.ZS} & 2000--2024 & 25\\
\bottomrule
\end{tabular}
\end{center}
The API codes identify the series. Definitions should be checked carefully, as observations may include estimates and adjustments for international comparability.
\end{example}

\begin{example}[Constructing a covariate vector]\label{ex:initialZ}
One illustrative specification combines medical resources, economic conditions and air quality:
\[
 Z_t^c=\bigl(\operatorname{std}(\log D_{c+t}),\,
 \operatorname{std}(\log Y_{c+t}),\,
 -\operatorname{std}(P_{c+t})\bigr)^\top
\]
Here $D_y$ is physician density, $Y_y$ is real PPP-adjusted GDP per capita, and $P_y$ is PM2.5 exposure in year $y$. The operator $\operatorname{std}$ subtracts the training-period mean and divides by the training-period standard deviation. The same transformation is applied during prediction. The minus sign on PM2.5 makes larger values correspond to lower pollution. This coding does not establish that higher income or physician density necessarily improves health. Alternative indicators, such as unemployment or heat exposure, could be used in sensitivity analyses.

This three-variable example illustrates construction of $Z$; the empirical analysis below uses only GDP and physician density. Missing annual values require an explicit interpolation rule. Moreover, similar trends in covariates and mortality may reflect common period effects. Separating environmental associations from age and cohort effects is therefore an identification issue.
\end{example}

\begin{remark}[Annual inputs and exogeneity]\label{rem:covariatecautions}
Annual covariates are aligned with cohort age through $Z_t^c=E_{c+t}$. Delayed or cumulative effects could instead be represented by $E_{c+t-r}$, $r\ge0$, or $\int_0^t w(t-s)E_{c+s}\,\df s$, with an explicit weighting function $w$, units and interpolation rule. Regional analyses also require geographic alignment. The Japanese Mortality Database provides national and prefectural data~\citep{JMDdata}, but migration can disconnect past exposure from current residence.

Environmental series often cover shorter periods than mortality data. Reconstructing exposure from birth for older cohorts may require historical sources, a restricted observation period or a separate model for energy at entry. Our empirical analysis also imputes early environmental values, making the results dependent on those assumptions.

Environmental indicators may themselves respond to health. Health expenditure, for example, can rise because morbidity increases. Life expectancy or mortality measures derived from the response data are excluded to avoid feeding the outcome back into the predictors. Public availability or lagging alone does not establish the independence assumption in Section~1. The model component affected by each covariate must be specified, and statistical associations must be distinguished from causal effects.
\end{remark}

\section{First-passage distributions under GSEM}\label{sec:computation}
We derive lifetime distributions for the GSEM introduced in Section~1 with $\xi\sim\operatorname{Exp}(\rho)$. Coefficients may depend on both age and the environment; the empirical specifications are given in Section~\ref{sec:paperig}. The IG model describes nondecreasing damage, whereas the ID model allows energy to fluctuate continuously. Appendix~\ref{app:generaldistribution} treats the general jump-diffusion framework containing both cases.

Fix an environmental path $z$ and suppress the cohort index $c$. Let $\th$ denote the model parameters and write $\P_\th^z$ and $\E_\th^z$ for conditional probability and expectation. Parameter dependence is suppressed where convenient, but environmental dependence is retained through a subscript, superscript or conditioning argument. For the conditional process $X^z$ and lifetime $\tau_z$, define
\[
 F_\th(t\mid z)=\P_\th^z(\tau_z\le t),\qquad S_\th(t\mid z)=1-F_\th(t\mid z)
\]
If the density $f_\th(t\mid z)=\partial_tF_\th(t\mid z)$ exists, the force of mortality is $\mu_\th(t\mid z)=f_\th(t\mid z)/S_\th(t\mid z)$ whenever $S_\th(t\mid z)>0$. Here $\partial_t$ denotes differentiation with respect to $t$. Initial energy is independent of the environment and the driving random inputs.
The CDF $F(t\mid z)$ integrates over initial heterogeneity, whereas $F(t\mid h,z)$ conditions additionally on initial energy $h>0$. They satisfy
\[
 F(t\mid z)=\int_0^\infty F(t\mid h,z)\rho e^{-\rho h}\,\df h
\]
We also give formulas for fixed initial energy, which is used in the empirical analysis. A fixed initial value corresponds to a point-mass initial distribution, not to sending the exponential rate $\rho$ to infinity.

\subsection{IG models: baseline depletion and accumulated damage}\label{sec:igdistribution}
Consider the GSEM without diffusion,
\begin{equation}
 X_t^z=\xi-C_z(t)-D_t^z,\qquad
 C_z(t)=\int_0^t\{\mu_0(u)/\rho-a_z(u)\}\,\df u,
 \qquad D_t^z=Y^{(\k)}_{\Lambda_z(t)}
 \label{eq:generalig_model}
\end{equation}
which extends the IG-based SEM~\citep{SSKMI2023,MS2024} by adding initial heterogeneity and baseline depletion. The cumulative continuous depletion $C_z$ is nondecreasing under the restriction $\mu_0(u)/\rho-a_z(u)\ge0$. Both $C_z$ and $D^z$ are then nondecreasing. In particular, setting $a_z=0$ allows any nonnegative baseline force of mortality $\mu_0$.

The process $Y^{(\k)}$ is a nondecreasing IG process starting at zero, with stationary independent increments satisfying, for $\k>0$,
\[
 Y^{(\k)}_u-Y^{(\k)}_v\sim\operatorname{IG}(u-v,\k(u-v)^2),\qquad 0\le v<u
\]
Here $\operatorname{IG}(m,k)$ denotes the inverse Gaussian distribution with mean $m$ and variance $m^3/k$. Thus $Y^{(\k)}_u$ has mean $u$ and variance $u/\k$. Assume that $\Lambda_z$ is absolutely continuous and nondecreasing, with $\Lambda_z(0)=0$. The time change $\Lambda_z(t)$ incorporates age and environmental effects on damage accumulation. Consequently, $D_t^z$ has mean $\Lambda_z(t)$ and variance $\Lambda_z(t)/\k$.

Monotonicity implies that survival to age $t$ is equivalent to initial energy exceeding cumulative loss at that age. The exponential tail gives
\[
 S(t\mid z)=\P\{\xi>C_z(t)+D_t^z\mid z\}
 =e^{-\rho C_z(t)}\E[e^{-\rho D_t^z}\mid z]
\]
The Laplace transform of the IG process is
\[
 \E[e^{-qY^{(\k)}_u}]=e^{-u\psi_\k(q)},\qquad
 \psi_\k(q)=\k\left(\sqrt{1+2q/\k}-1\right),\qquad q\ge0
\]
and hence the survival function and first-passage CDF are
\begin{equation}
 S(t\mid z)=\exp\{-\rho C_z(t)-\Lambda_z(t)\psi_\k(\rho)\},\qquad
 F(t\mid z)=1-S(t\mid z)
 \label{eq:expig_survival}
\end{equation}
This is the IG specialization of the monotone-damage identity in Remark~\ref{rem:monotoneidentification}. If $a_z=0$, the first term in the exponent becomes $-\int_0^t\mu_0(u)\,\df u$, so accumulated damage modifies the survival function associated with the baseline force of mortality. No specific functional form for $\mu_0$ or $\Lambda_z$ is required. If recovery makes $C_z$ decrease, however, earlier boundary crossings must also be considered, and this monotonicity-based expression no longer applies in general.

\paragraph{Fixed initial energy.}
For $\xi=x_0>0$, replace averaging over the exponential distribution by
\begin{equation}
 S(t\mid x_0,z)=\P\{D_t^z<x_0-C_z(t)\mid z\}
 \label{eq:fixedig_threshold}
\end{equation}
For $r_z(t)=x_0-C_z(t)>0$ and $L_z(t)=\Lambda_z(t)>0$, the IG distribution gives
\begin{align}
 S(t\mid x_0,z)
 &=\Phi\!\left(\sqrt{\frac\k{r_z(t)}}[r_z(t)-L_z(t)]\right)\notag\\
 &\quad+e^{2\k L_z(t)}\Phi\!\left(-\sqrt{\frac\k{r_z(t)}}[r_z(t)+L_z(t)]\right),
 \label{eq:fixedig_survival}\\
 F(t\mid x_0,z)&=1-S(t\mid x_0,z)
 \label{eq:fixedig_cdf}
\end{align}
where $\Phi$ is the standard normal CDF. Survival is zero if $r_z(t)\le0$ and equals one if $r_z(t)>0$ and $L_z(t)=0$. In the empirical IG models, we set $C_z=0$ and represent all damage accumulation through $\Lambda_z$, so that $r_z(t)=x_0$.

\subsection{ID models: time-dependent drift and diffusion}\label{sec:fixedid_distribution}
The diffusion specification corresponding to the SEM of \citet{SMI2021} is
\begin{equation}
 \df X_t^z=b_z(t)\,\df t+\s_z(t)\,\df W_t,\qquad
 b_z(t)=-\mu_0(t)/\rho+a_z(t),\qquad
 X_0^z=\xi\sim\operatorname{Exp}(\rho)
 \label{eq:generalid_model}
\end{equation}
The drift $b_z$ combines baseline depletion $\mu_0/\rho$ and the adjustment $a_z$, while $\s_z$ controls random fluctuations. No sign restriction is imposed on $b_z$, allowing both depletion and recovery. Conditional on the environmental path, define cumulative drift and variance by
\[
 M_z(t)=\int_0^t b_z(u)\,\df u,\qquad
 A_z(t)=\int_0^t\s_z(u)^2\,\df u
\]
We impose Assumption~\ref{ass:reg} in Appendix~\ref{app:diffusion}. Piecewise continuous coefficients, including annual inputs, are covered under the conditions stated in Appendix~\ref{sec:diffusiondistribution}.

Unlike monotone IG damage, the unabsorbed diffusion may return above zero after a crossing. Its value at age $t$ therefore does not determine whether death has already occurred. We use Corollary~\ref{cor:mixedfortet}, obtained by averaging Theorem~\ref{thm:fortet} over the exponential initial distribution. The CDF of $\tau_z=\inf\{t:X_t^z\le0\}$ satisfies
\begin{equation}
 Q_{\rho,z}(t)=\int_{[0,t]}K_z(t,u)\,\df F(u\mid z),\qquad
 K_z(t,u)=\Phi\!\left(-\frac{M_z(t)-M_z(u)}{\sqrt{A_z(t)-A_z(u)}}\right),\quad u<t.
 \label{eq:expid_fortet}
\end{equation}
with $K_z(t,t)=1/2$. The left-hand side is the endpoint probability averaged over initial energy,
\begin{align*}
 Q_{\rho,z}(t)
 &=\Phi\!\left(-\frac{M_z(t)}{\sqrt{A_z(t)}}\right)\\
 &\quad-e^{\rho M_z(t)+\rho^2A_z(t)/2}
 \Phi\!\left(-\frac{M_z(t)+\rho A_z(t)}{\sqrt{A_z(t)}}\right)
\end{align*}
For general coefficients, the integral equation is solved numerically. Appendix~\ref{app:integral} gives the derivation, and Appendix~\ref{app:quadrature} describes the discretization.

\paragraph{Fixed initial energy.}
For a fixed initial value $x_0>0$, Theorem~\ref{thm:fortet} applies directly. The left-hand side is replaced by the corresponding fixed-initial-value endpoint probability, giving
\begin{equation}
 \Phi\!\left(-\frac{x_0+M_z(t)}{\sqrt{A_z(t)}}\right)
 =\int_{[0,t]}K_z(t,u)\,\df F(u\mid x_0,z)
 \label{eq:fixedid_fortet}
\end{equation}
for the same drift and diffusion coefficients. The kernel $K_z$ does not depend on the initial distribution. We use this equation in the empirical ID analysis. With fixed initial energy there is no exponential rate parameter; the net drift $b_z$ is specified directly.

\subsection{Conditioning on survival to entry and annual death probabilities}\label{sec:usedmodel_observation}
Mortality data need not follow individuals from birth. When observation begins at age 20, for example, death probabilities must be conditioned on survival to that age. This applies to both random and fixed initial energy. For brevity, we also write $F_c(t\mid z)$ and $S_c(t\mid z)$ in the fixed-initial-value case, suppressing the specified initial value. The probability of death by age $t$, conditional on survival to entry age $e_c$, is
\[
 F_c(t\mid\tau_{c,z}>e_c,z)
 =1-\frac{S_c(t\mid z)}{S_c(e_c\mid z)}
 =\frac{F_c(t\mid z)-F_c(e_c\mid z)}{1-F_c(e_c\mid z)},\qquad t\ge e_c
\]
provided $S_c(e_c\mid z)>0$. Conditioning retains the damage accumulated from birth and selects survivors; it does not reset energy at entry. The annual death probability for an individual alive at age $a$ is
\[
 q_{c,a}(z)=1-\frac{S_c(a+1\mid z)}{S_c(a\mid z)}
\]
For the fixed-initial-energy IG and ID models, these quantities are obtained from \eqref{eq:fixedig_survival} and \eqref{eq:fixedid_fortet}, respectively. Ratios can be unstable when survival probabilities are extremely small; annual probabilities whose numerical accuracy cannot be verified are treated as missing. Estimation uses squared errors in the conditional CDF, while annual death probability errors are reported as an additional measure.

\section{IG and ID specifications for Japanese mortality}\label{sec:paperig}
The empirical specifications were selected through extensive numerical comparisons. These explored initial energy distributions, damage and diffusion coefficients, the placement of covariate effects, information sharing across cohorts, estimation objectives and extrapolation rules. The resulting sex-specific fixed-initial-energy IG models combine competitive predictive performance with interpretable coefficients. We also consider a fixed-initial-energy ID model with an age change point, allowing for cohort effects and alternative estimation objectives.

This section presents the selected models and their estimation, rather than an exhaustive catalogue of all specifications explored. Section~\ref{sec:selectedresults} reports their predictive performance.
We fix the energy scale at $x_0=1000$ for IG models, following \citet{SSKMI2023}, and at $x_0=1$ for ID models. These are choices of units: multiplying initial energy, drift, diffusion coefficients and jump sizes by the same positive constant leaves the first-passage time unchanged.

\subsection{Mortality data and two environmental covariates}
Let $y$ denote calendar year and $x$ the age recorded in the data. We use national sex-specific deaths $D_{x,y}$ and person-years of exposure $E_{x,y}$ for Japan over 1947--2024 from the Japanese Mortality Database (JMD)~\citep{JMDnational}. For recorded ages $x=20,\ldots,89$, central death rates $m_{x,y}=D_{x,y}/E_{x,y}$ are converted to annual death probabilities as $q_{x,y}=1-e^{-m_{x,y}}$. Multiplying annual survival probabilities along cohort diagonals $c=y-x$ yields a mortality CDF for each sex and cohort. The terminal CDF is not rescaled to one, since some individuals remain alive at the end of observation. To exclude cohorts with limited follow-up, we remove three cohorts at each end of the training birth-year range. This rule (clip=3) is applied to every comparison model.

The covariates are real GDP per capita and physician density. Each series is log-transformed and standardized using only the mean and standard deviation of the training years, giving $\boldsymbol Z_y=(Z_{1,y},Z_{2,y})^\top$. National annual values are shared across sexes, but response coefficients are estimated separately.
\begin{table}[htbp]\centering\small
\caption{Input series and imputation for the selected models.}\label{tab:paperig_inputs}
\begin{tabular}{p{.18\linewidth}p{.71\linewidth}}\toprule Series & Coverage and preprocessing\\\midrule
Deaths and exposure & JMD, 1947--2024, by sex, recorded ages 20--89. Estimation uses data through the prediction origin.\\
Real GDP per capita & WDI, 1960--2024, spliced to a 1955--1959 series. Values before 1955 are set to the 1955 level. Log values are standardized over 1947 through the origin.\\
Physician density & Published 1955--2024 counts of physicians working in medical facilities per 100,000 population. Training values are linearly interpolated using observations through the origin; endpoints are extended by their nearest available value. Log values are standardized over the training period.\\\bottomrule
\end{tabular}\end{table}
Early GDP values are obtained by splicing historical series for this study; the resulting input is not a single official series constructed on a uniform basis~\citep{WDIgdp,CAOlong}. Physician density counts physicians working in medical facilities rather than all registered physicians~\citep{Phys2024series}. Training-period interpolation uses no physician observations beyond the prediction origin. The earliest value is carried backward and the last available value forward, making early-life environmental effects sensitive to imputation. During the prediction period, realized annual GDP and physician density are used. We therefore assess mortality prediction conditional on the environmental path, without forecasting the covariates themselves. The series also incorporate later revisions, so this exercise differs from prediction using only information available in real time.

\subsection{Male IG model: separate responses of baseline and age-related damage}
For cohort $c$, let
\[
X_c^z(a)=1000-D^{(c)}(\Lambda_{c,z}(a))
\]
where $D^{(c)}(s)$ is a nondecreasing IG process with mean $s$ and variance $s/\kappa_c$. The parameter $\kappa_c$ is constant over time within a cohort.
For males, we specify
\begin{align}
 \Lambda'_{c,z}(a)&=e^{g(c)}\left[b e^{-\boldsymbol\gamma_b^\top\boldsymbol Z_{c+a}}
       +\alpha e^{\alpha a}e^{-\boldsymbol\gamma_a^\top\boldsymbol Z_{c+a}}\right],
       \quad\Lambda_{c,z}(0)=0,\label{eq:selectedmale}\\
 g(c)&=ku_c+B(c)^\top\boldsymbol\theta,\quad
 u_c=c-\bar c_{\rm train},\quad
 \kappa_c=\kappa e^{vu_c}
\end{align}
We call the age-independent component \emph{baseline damage} and the increasing component \emph{age-related damage}; both depend on the environment. The parameters $\alpha,b,\kappa$ are positive. Here $b$ is the baseline damage coefficient, distinct from the net drift $b_z$ in the general model, and $\alpha e^{\alpha a}$ governs age-related accumulation. The vectors $\boldsymbol\gamma_b,\boldsymbol\gamma_a\in[0,2]^2$ describe responses to the same GDP and physician-density inputs. The smooth cohort effect $g(c)$ changes the rate of accumulation: a larger value implies greater expected damage, other things equal. The parameter $v$ controls cohort variation in damage variance for a fixed mean. A Gompertz-shaped time change does not imply a Gompertz--Makeham first-passage hazard for IG damage.

Here $\bar c_{\rm train}$ is the mean training birth year, so $u_c$ measures deviations in years. The coefficients $k$ and $v$ are therefore annual birth-year effects. The spline coefficient vector $\boldsymbol\theta$ is only one component of the full parameter collection denoted by $\th$.

The natural cubic spline $B(c)$ uses five equally spaced knots over the training birth-year range. After removing constant and linear components, the three remaining basis functions are scaled to unit root mean square. We impose the curvature penalty $10\|R\boldsymbol\theta\|^2$, where $R^\top R$ is the numerically integrated matrix of products of second derivatives. Bounds are $\boldsymbol\theta\in[-1,1]^3$, $k\in[-0.1,0.1]$ and $v\in[-0.05,0.05]$. The additional penalty $10^{-4}\|\boldsymbol\gamma_a-\boldsymbol\gamma_b\|^2$ discourages excessive differences between the two environmental responses. The model has 12 parameters, jointly estimated across all male training cohorts.

\subsection{Female IG model: age-dependent exposure to a period effect}
For females, the same IG first-passage distribution is used with
\begin{align}
 \Lambda'_{c,z}(a)&=(b+\alpha e^{\alpha a})
 \exp\{ku_c-\boldsymbol\gamma(a)^\top\boldsymbol Z_{c+a}
                 +\ell(a)r_{c+a}\},\label{eq:selectedfemale}\\
 \boldsymbol\gamma(a)&=(1-h(a))\boldsymbol\gamma_y+h(a)\boldsymbol\gamma_o,
 \quad h(a)=\tfrac12\{1+\tanh((a-65)/10)\},\\
 \ell(a)&=1+\beta h(a),\quad \kappa_c=\kappa,\quad\beta\in[-0.75,0.75]
\end{align}
The vectors $\boldsymbol\gamma_y,\boldsymbol\gamma_o\in[0,2]^2$ govern environmental responses at younger and older ages. The weight $h(a)$ changes smoothly from zero to one around age 65. A common period effect $r_y$ captures annual variation not explained by GDP and physician density. Its age loading is $\ell(a)$; when $\beta=0$, all ages have the same loading. Unlike the male model, the mean damage effect is linear in birth year and $\kappa$ is common across cohorts. During training, $\boldsymbol r=Q\boldsymbol\eta$, where the columns of $Q$ form an orthonormal basis orthogonal to the constant and the two environmental series, with $\eta_j\in[-3,3]$. This removes exact overlap with those components, but does not guarantee full identification. Normalizing the younger-age loading to one fixes the scale of the period effect.

The vector $\Delta^2\boldsymbol r$ contains second differences $r_y-2r_{y-1}+r_{y-2}$.
We use the penalty $0.01\|\Delta^2\boldsymbol r\|^2+0.001\|\boldsymbol\gamma_o-\boldsymbol\gamma_y\|^2+0.0001\beta^2$ to control annual roughness and excessive differences in age responses. Including period effects gives 64--79 parameters, depending on the prediction origin, compared with 12 for males. At origin $o$, the period effect is extrapolated as
\begin{equation}
 r_{o+j}=r_o+1.25\left(\frac15\sum_{i=0}^{4}[r_{o-i}-r_{o-i-1}]\right)j,
 \qquad j=1,\ldots,5
 \label{eq:selectedstateforecast}
\end{equation}
This extends the average annual change over the most recent five years at 1.25 times its rate for the next five years. The multiplier applies to changes in the period effect, not to death probabilities. Its suitability for longer horizons requires separate assessment.

\subsection{ID benchmark: an age change point and environmental diffusion effects}\label{sec:empiricalid}
We also fit a model with continuously fluctuating survival energy to the same data. Fixing $x_0=1$ in the diffusion model of Section~\ref{sec:fixedid_distribution} and suppressing cohort indices gives
\begin{equation}
 dX^z(a)=\mu A_T(a)\,da+
 \sigma\sqrt{A_T(a)}e^{-\boldsymbol\gamma^\top z(a)}\,dW(a),
 \qquad A_T(a)=1+\eta(a-T)_+^p
 \label{eq:fixedid_sde}
\end{equation}
where $(u)_+=\max(u,0)$, $\mu<0$, $\sigma,\eta,p>0$, and $T$ is the change-point age. Constant depletion and fluctuations operate before $T$, with an additional age effect thereafter. Covariates enter the diffusion coefficient and are estimated separately from the age profile of the drift.

Conditional on the environmental path, cumulative drift and variance are
\begin{align}
 M(a)&=\mu\left[a+\frac{\eta}{p+1}(a-T)_+^{p+1}\right],\\
 V_z(a)&=\sigma^2\int_0^a A_T(u)e^{-2\boldsymbol\gamma^\top z(u)}\,du
 \label{eq:fixedid_variance}
\end{align}
The variance integral is evaluated interval by interval with annual covariates held constant within each year.
The CDF follows from \eqref{eq:fixedid_fortet} with $M_z=M$ and $A_z=V_z$. Cohort coefficients are specified as
\[ \mu_c=-\exp\{m+k_\mu u_c+B(c)^\top\boldsymbol\theta\},\qquad \sigma_c=\exp\{s+k_\sigma u_c\} \]
Here $m$ and $s$ are the baseline log absolute drift and log diffusion coefficient, while $k_\mu$ and $k_\sigma$ are linear birth-year effects. The spline $B$ is constructed as in the male IG model; setting $\boldsymbol\theta=0$ gives a linear cohort specification. Covariates act on diffusion through $\exp(-\boldsymbol\gamma^\top\boldsymbol Z)$ rather than on the IG damage accumulation rate. The numerical sizes of the two models' environmental coefficients are therefore not directly comparable.
The selected male ID model uses $T=50$, a natural spline for the cohort drift effect, and equal weights on entry-conditioned CDF errors and five-year prediction errors within the training period.
The selected female ID model has the same structure and loss weights but uses $T=60$.
Four candidate specifications varied the change-point age (50 or 60), the cohort drift representation and the use of within-training five-year prediction errors. These were also compared with the previous linear-cohort ID model. The spline penalty is $10\|R\boldsymbol\theta\|^2$, as in the IG specification. Linear and spline versions have 8 and 11 parameters, respectively. One model is selected per sex and used for all cohorts, rather than choosing a separate specification from each cohort's evaluation results.
Parameter bounds are $|\mu_{\rm base}|\in[10^{-6},0.02]$ for $\mu_{\rm base}=-e^m$, $\sigma_{\rm base}\in[e^{-7},e]$ for $\sigma_{\rm base}=e^s$, $\eta\in[e^{-12},e^6]$, $p\in[0.2,8]$, $k_\mu,k_\sigma\in[-0.15,0.15]$, and $\gamma_j\in[0,2]$. The Fortet equation is solved using two steps per year during estimation and eight for reported predictions; accuracy is checked against sixteen steps per year. This makes ID estimation more costly than evaluation of the explicit IG survival function.
\subsection{Estimation objective and computation}\label{sec:estimation}
Parameters are estimated by matching model and observed CDFs. Hats denote estimates. Conditional on survival to entry age $e_c$, with $S_\th(e_c\mid z^c)>0$, the model CDF is
\[
 F_{\th,c}(t\mid\tau>e_c,z^c)
 =\frac{F_\th(t\mid z^c)-F_\th(e_c\mid z^c)}{S_\th(e_c\mid z^c)},\qquad t\ge e_c
\]
which does not reset energy at entry. This CDF-fitting approach follows \citet{SMI2021}. Errors across ages within a cohort are correlated, so standard errors based on independent CDF observations are not used.

The implementation scales birth-year deviations in decades. For presentation, we use years and divide the corresponding estimates and bounds for $k,v,k_\mu,k_\sigma$ by ten, leaving fitted products and predictions unchanged. Let $L_E$ be the mean squared CDF error conditional on survival to $e_c=\max(20,1947-c)$. Let $L_W$ be the mean squared error in five-year conditional CDFs at rolling origins within training. The female IG objective is $L_E$ plus penalties; the male objective is $\tfrac12L_E+\tfrac12L_W$ plus penalties. For prediction origin $o$, male $L_W$ uses the 20 origins $o-24,\ldots,o-5$, whose five-year endpoints are all at or before $o$. No post-origin mortality is used for training. The term $L_E$ uses all training observations from 1947 through $o$. Differences between male and female performance therefore reflect both specification and estimation choices.

We use box-constrained L-BFGS-B optimization from two training-based initial candidates and retain the solution with the smaller penalized objective. Common bounds are $\alpha\in[0.005,0.25]$, $b\in[0.01,100]$ and $\kappa\in[10^{-8},0.1]$. The IG CDF is evaluated from \eqref{eq:fixedig_survival} using cohort-specific $\kappa_c,\Lambda_{c,z}$, $C_z=0$ and $x_0=1000$. With covariates constant within each year, integration is analytic for males and uses eight-point Gauss--Legendre quadrature for females. Checks include finite-difference comparisons of derivatives, finer quadrature and recovery of known special cases. Optimizer termination alone does not establish global optimality.

\section{Predictive performance and interpretation}\label{sec:selectedresults}
\subsection{Evaluation design}
We use prediction origins 2004, 2009, 2014 and 2019, fitting each model to data available through the origin and predicting the next five years. At each origin $o$, the target cohorts are $c=o-79,\ldots,o-50$ for each sex. Origin, cohort and prediction year jointly define an evaluation point, giving $4\times30\times5=600$ points per sex.

Recorded age $x$ corresponds to the model's interval endpoint $a=x+1$. Calendar year is therefore $y=c+a-1$, and age at the prediction origin is $a_{0,c}=o-c+1$. For survivors at that origin, we evaluate
\[
 \widehat F_{c,o}(a\mid z)=1-\widehat S_c(a\mid z)/\widehat S_c(a_{0,c}\mid z),
 \qquad a=a_{0,c}+1,\ldots,a_{0,c}+5
\]
as the conditional probability of death by age $a$. The primary performance measure is
\begin{equation}
 R_F=\left\{\frac14\sum_o\frac1{150}\sum_{c,h}
 [\widehat F_{c,o}(a_{0,c}+h\mid z)-F^{\rm obs}_{c,o}(a_{0,c}+h)]^2\right\}^{1/2}
 \label{eq:selectedscore}
\end{equation}
where $h=1,\ldots,5$ denotes the prediction horizon and $F^{\rm obs}_{c,o}$ is the observed CDF under the same survival conditioning. The root mean squared error $R_F$ aggregates squared errors over all 600 points. We also compute the annual death probability RMSE, $R_q$, on the same points.

The same periods informed model and prediction-rule selection. These results therefore constitute a development-stage evaluation, not an untouched final test. Although one specification is used across all cohorts of each sex, an independent assessment requires new data.

\subsection{Reduced-form benchmarks: CBD models M6 and M7}\label{sec:cbdcomparison}
The Cairns--Blake--Dowd (CBD) model represents age and calendar-time variation through common period factors~\citep{Cairns2006}. Extensions allow cohort effects and more flexible age profiles. The labels M6 and M7 follow \citet{Cairns2009}, whose comparison designated Lee--Carter and related models as M1--M3, a spline mortality-surface model as M4, the basic CBD model as M5, and its extensions as M6--M8. We select M6, with a cohort effect, and M7, which also includes a quadratic age term, to compare GSEM with two CBD specifications differing in age curvature. An exhaustive comparison of the CBD family, including M8, is beyond the scope of this study.

Let $x$ be age, $y$ calendar year, $c=y-x$ birth year, and $q_{x,y}$ the annual death probability. The basic two-factor CBD model (M5) is
\begin{equation}
 \log\frac{q_{x,y}}{1-q_{x,y}}
 =K_y^{(1)}+(x-\bar x)K_y^{(2)}
 \label{eq:cbd-basic}
\end{equation}
where $\bar x$ is the mean age in the fitted range. The factors $K_y^{(1)}$ and $K_y^{(2)}$ describe changes in mortality level and age slope. CBD is reduced-form in the sense that it specifies the mortality surface directly. GSEM instead specifies an energy process and derives a lifetime distribution from its first-passage time. Its structural interpretation concerns the depletion and fluctuation mechanisms posited within the model~\citep{SMI2021,SSKMI2023}.

M6 adds a cohort effect. M7 adds a quadratic age term, allowing the curvature of the logit mortality profile to vary over time~\citep{Cairns2009}. Writing $\eta_{x,y}$ for the mortality predictor on the chosen link scale gives
\begin{align}
 \mathrm{M6}:\quad \eta_{x,y}
 &=K_y^{(1)}+(x-\bar x)K_y^{(2)}+\zeta_{y-x},
 \label{eq:cbd-m6}\\
 \mathrm{M7}:\quad \eta_{x,y}
 &=K_y^{(1)}+(x-\bar x)K_y^{(2)}\notag\\
 &\quad+\{(x-\bar x)^2-s_x^2\}K_y^{(3)}+\zeta_{y-x},
 \label{eq:cbd-m7}
\end{align}
where $s_x^2$ is the average of $(x-\bar x)^2$ over the age range, and $\zeta_c$ is the cohort effect. The third factor in M7 describes period variation in age curvature. Period factors are shared across cohorts, while $\zeta_c$ links observations within a cohort.

Estimation uses the identifying constraints of \citet{Cairns2009} and the Poisson formulation with log link for central death rates, $\log m_{x,y}=\eta_{x,y}$~\citep{Villegas2018}. Parameters are estimated by maximum likelihood from deaths and exposures through origin $o$, separately by sex. Future period factors are extrapolated using the mean of the 19 annual increments in the latest 20 estimated factor values. Effects for the already observed target cohorts are held at their fitted values.

Predicted annual death probabilities are $\widehat q_{x,y}=1-\exp\{-\exp(\widehat\eta_{x,y})\}$, assuming a constant force of mortality within each age--year cell. For cohort $c$, the probability of death within $h$ years, conditional on survival through origin $o$, is
\begin{equation}
 \widehat F^{\mathrm{CBD}}_{c,o}(h)
 =1-\prod_{r=1}^{h}\bigl(1-\widehat q_{o+r-c,o+r}\bigr),
 \qquad h=1,\ldots,5
 \label{eq:cbd-cdf}
\end{equation}
where $r$ is years since the origin and age in calendar year $o+r$ is $o+r-c$.

Training data, ages 20--89, exclusion of three cohorts at each boundary, prediction origins and evaluation points are aligned with GSEM. CBD receives no environmental covariates and is estimated by likelihood rather than CDF least squares. The comparison therefore concerns each model together with its stated estimation and prediction procedure.

\subsection{Prediction errors and numerical checks}\label{sec:predictioncomparison}
Table~\ref{tab:selectedaggregate} reports aggregate RMSEs and the number and proportion of evaluation points at which each model has the smallest absolute error. Table~\ref{tab:selectedfolds} gives RMSEs by origin. Pointwise counts complement RMSE by measuring how often a prediction is closest, rather than the magnitude of its error. Because observations within cohorts are correlated, these proportions are not win probabilities from independent trials.

\begin{table}[htbp]\centering\small
\caption{Historical prediction evaluation over 600 points per sex. RMSEs and counts (percentages) of smallest absolute errors among four models are reported; minima and largest counts are bold. Numbers of origins with lower CDF RMSE than M6/M7 are: IG, males 2/4 and 2/4, females 3/4 and 0/4; ID, males 0/4 and 1/4, females 0/4 and 0/4.}\label{tab:selectedaggregate}
\setlength{\tabcolsep}{4pt}
\begin{tabular}{llrrrr}\toprule
 & & \multicolumn{2}{c}{Mortality CDF} & \multicolumn{2}{c}{Annual death probability}\\
Sex & Model & RMSE $R_F$ & Best count (\%) & RMSE $R_q$ & Best count (\%)\\\midrule
Male & IG & \textbf{0.002900666} & \textbf{317/600 (52.8\%)} & \textbf{0.001268440} & \textbf{296/600 (49.3\%)}\\
Male & ID & 0.004172846 & 45/600 (7.5\%) & 0.001944777 & 52/600 (8.7\%)\\
Male & M6 & 0.003092249 & 84/600 (14.0\%) & 0.001346952 & 84/600 (14.0\%)\\
Male & M7 & 0.003222181 & 154/600 (25.7\%) & 0.001405803 & 168/600 (28.0\%)\\
\midrule
Female & IG & 0.002023327 & 88/600 (14.7\%) & 0.000758390 & 84/600 (14.0\%)\\
Female & ID & 0.004466515 & 52/600 (8.7\%) & 0.001690519 & 51/600 (8.5\%)\\
Female & M6 & 0.002313694 & 167/600 (27.8\%) & 0.000851078 & 193/600 (32.2\%)\\
Female & M7 & \textbf{0.001630848} & \textbf{293/600 (48.8\%)} & \textbf{0.000695235} & \textbf{272/600 (45.3\%)}\\
\bottomrule\end{tabular}
\par\smallskip\begin{minipage}{.97\linewidth}\footnotesize Counts use unrounded stored predictions at each point. There were no missing values or ties. In general, ties would be credited to all tied models. Percentages are rounded to one decimal place.\end{minipage}
\end{table}
\begin{table}[htbp]\centering
\caption{Conditional CDF RMSE by origin; row minima are bold. IG improves on ID at 3/4 male and 4/4 female origins. Among all four models, the smallest errors occur for male IG at 2/4 origins, M6 at 1/4 and M7 at 1/4; for females, M6 is best at 1/4 and M7 at 3/4. All other counts are 0/4.}\label{tab:selectedfolds}
\begin{tabular}{llrrrr}\toprule Sex & Origin & IG & ID & M6 & M7\\\midrule
Male & 2004 & \textbf{0.0015252} & 0.0041240 & 0.0039877 & 0.0045119\\
Male & 2009 & \textbf{0.0021680} & 0.0043721 & 0.0029252 & 0.0028522\\
Male & 2014 & 0.0038699 & 0.0036860 & \textbf{0.0025091} & 0.0025702\\
Male & 2019 & 0.0034136 & 0.0044656 & 0.0027376 & \textbf{0.0025361}\\
Female & 2004 & 0.0017942 & 0.0021400 & 0.0021126 & \textbf{0.0016849}\\
Female & 2009 & 0.0022170 & 0.0042645 & \textbf{0.0018442} & 0.0020833\\
Female & 2014 & 0.0021133 & 0.0042386 & 0.0030581 & \textbf{0.0008868}\\
Female & 2019 & 0.0019429 & 0.0062504 & 0.0020485 & \textbf{0.0016350}\\
\bottomrule\end{tabular}\end{table}

Aggregate CDF RMSE is lowest for male IG and female M7. Male IG has approximately 6.2\% lower RMSE than M6 and 10.0\% lower than M7, but exceeds both at the 2014 and 2019 origins. Female IG has approximately 12.5\% lower RMSE than M6 and 24.1\% higher RMSE than M7. ID has larger aggregate RMSE than IG for both sexes. Since parameter counts and estimation procedures also differ, this ranking cannot be attributed solely to damage versus diffusion dynamics.

\FloatBarrier
The figures show conditional mortality CDFs and absolute errors for cohort $c=o-65$ at each origin. Errors are expressed as $100\lvert\widehat F_{c,o}(\cdot\mid z)-F^{\rm obs}_{c,o}\rvert$ in percentage points. Each legend applies to all panels of its figure.

\begin{figure}[!htbp]\centering
\includegraphics[width=.90\linewidth]{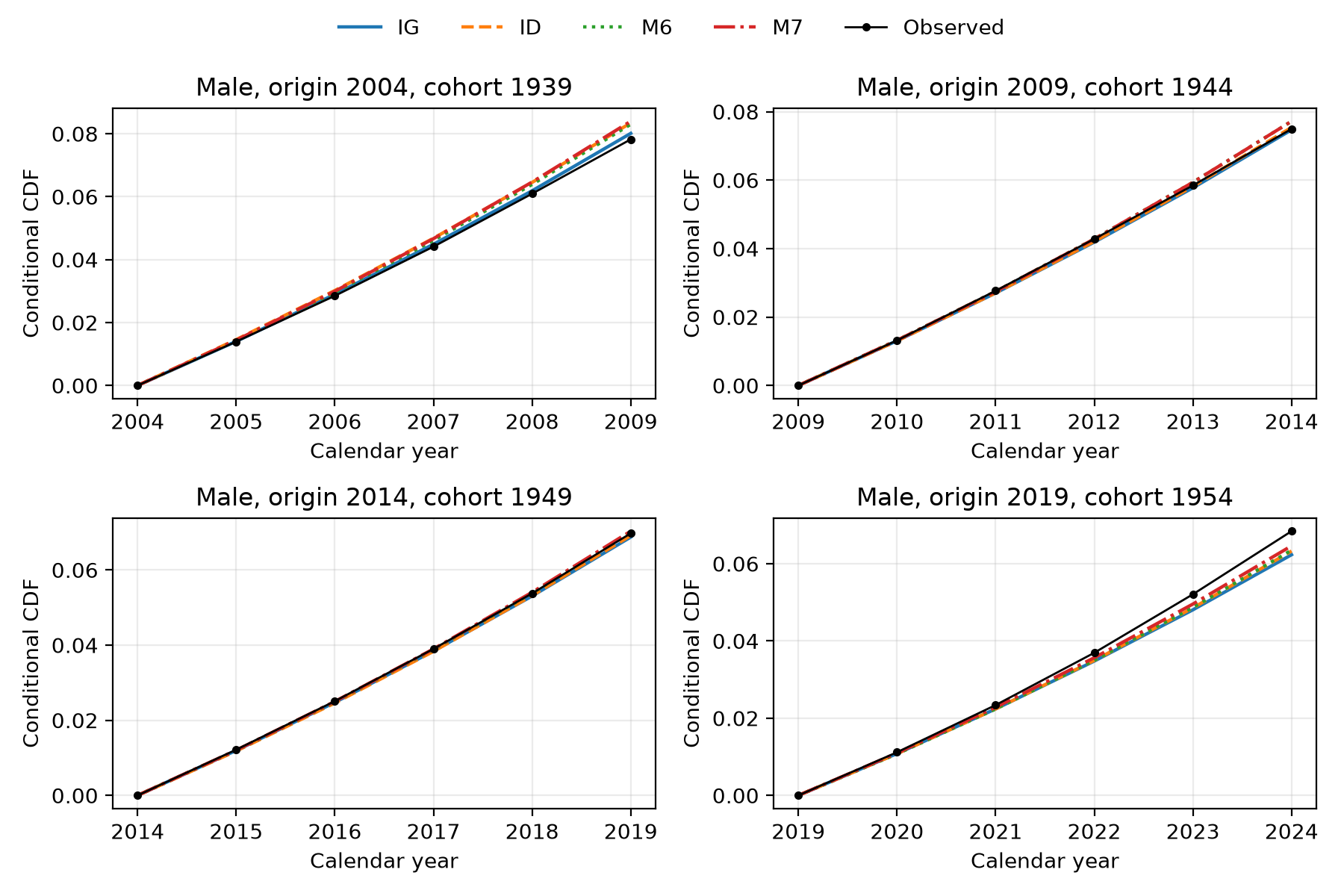}
\caption{Male mortality CDFs conditional on survival to the prediction origin, for cohort $c=o-65$ at each origin. Cohorts are selected by this rule, not by predictive performance. Conditional CDFs start at zero at the origin; subsequent values are predictions. The selected IG and ID specifications and the stated M6/M7 prediction procedures are used.}\end{figure}
\begin{figure}[!htbp]\centering
\includegraphics[width=.90\linewidth]{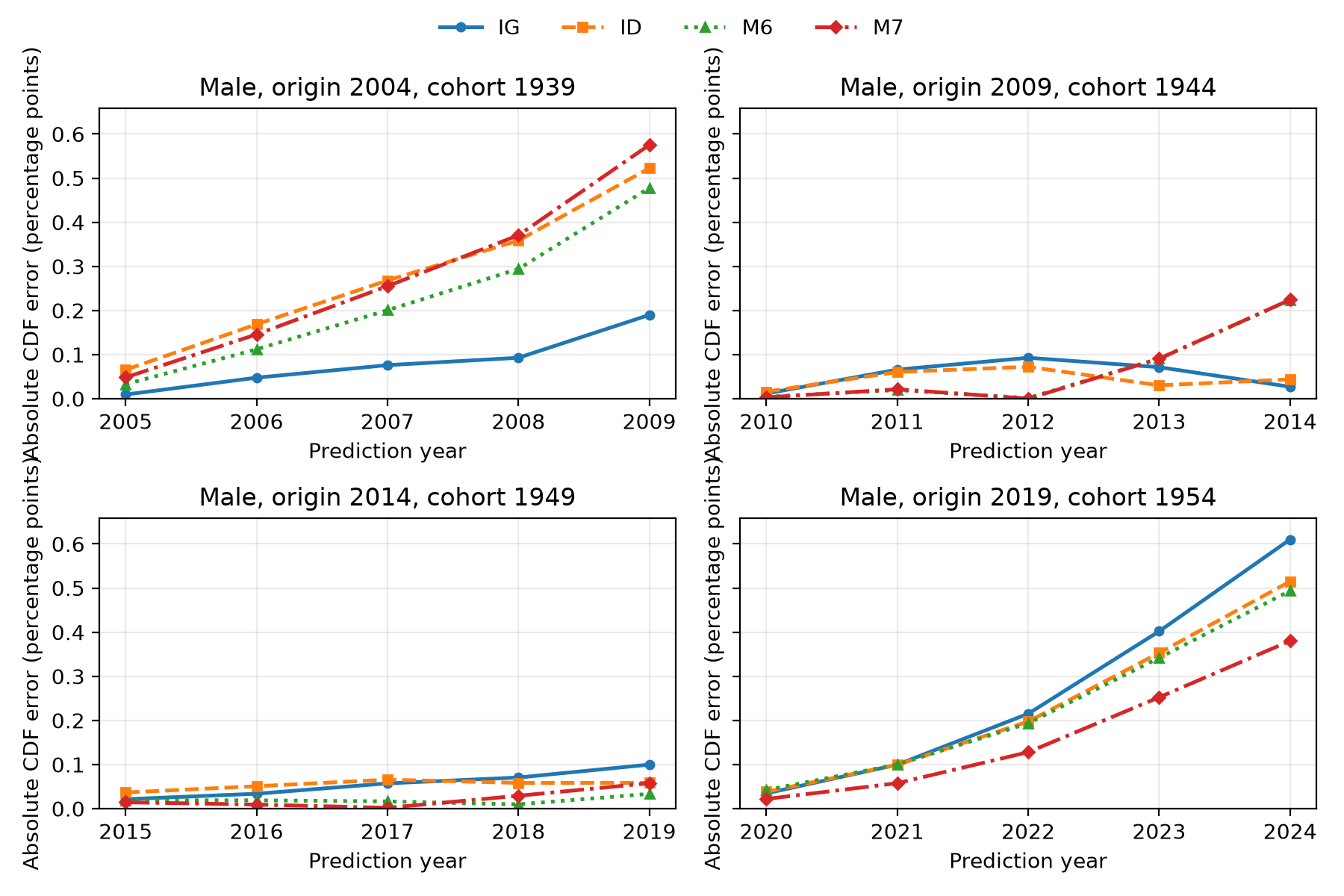}
\caption{Absolute errors in male conditional mortality CDFs (percentage points), using the same cohort rule, four origins and five prediction years as the preceding figure. Line styles and markers distinguish IG, ID, M6 and M7. Errors use the same predictions as the CDF plot.}\end{figure}
\begin{figure}[!htbp]\centering
\includegraphics[width=.90\linewidth]{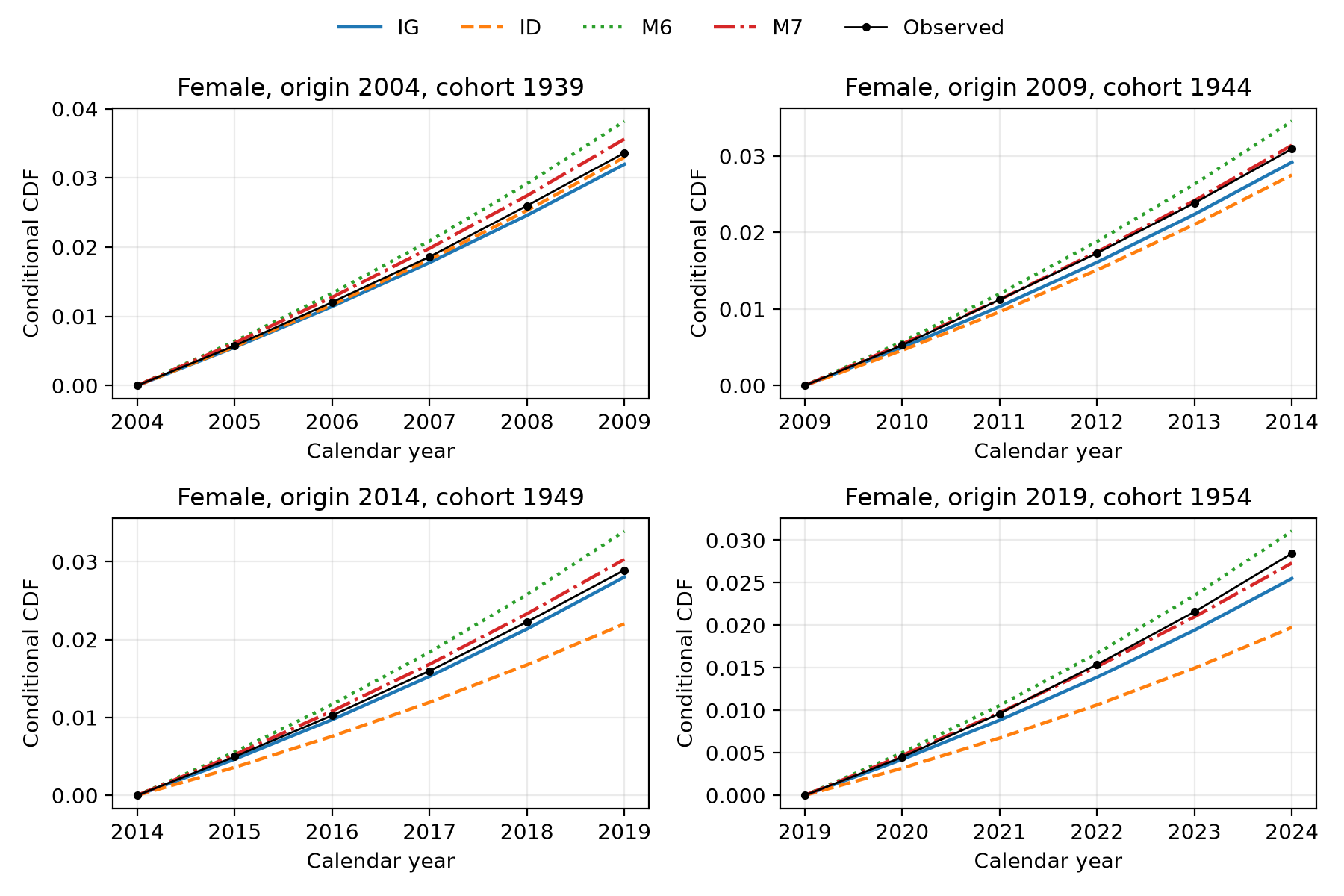}
\caption{Female mortality CDFs conditional on survival to the prediction origin, for cohort $c=o-65$ at each origin. Cohorts are selected by this rule, not by predictive performance. Conditional CDFs start at zero at the origin; subsequent values are predictions. The selected IG and ID specifications and the stated M6/M7 prediction procedures are used.}\end{figure}
\begin{figure}[!htbp]\centering
\includegraphics[width=.90\linewidth]{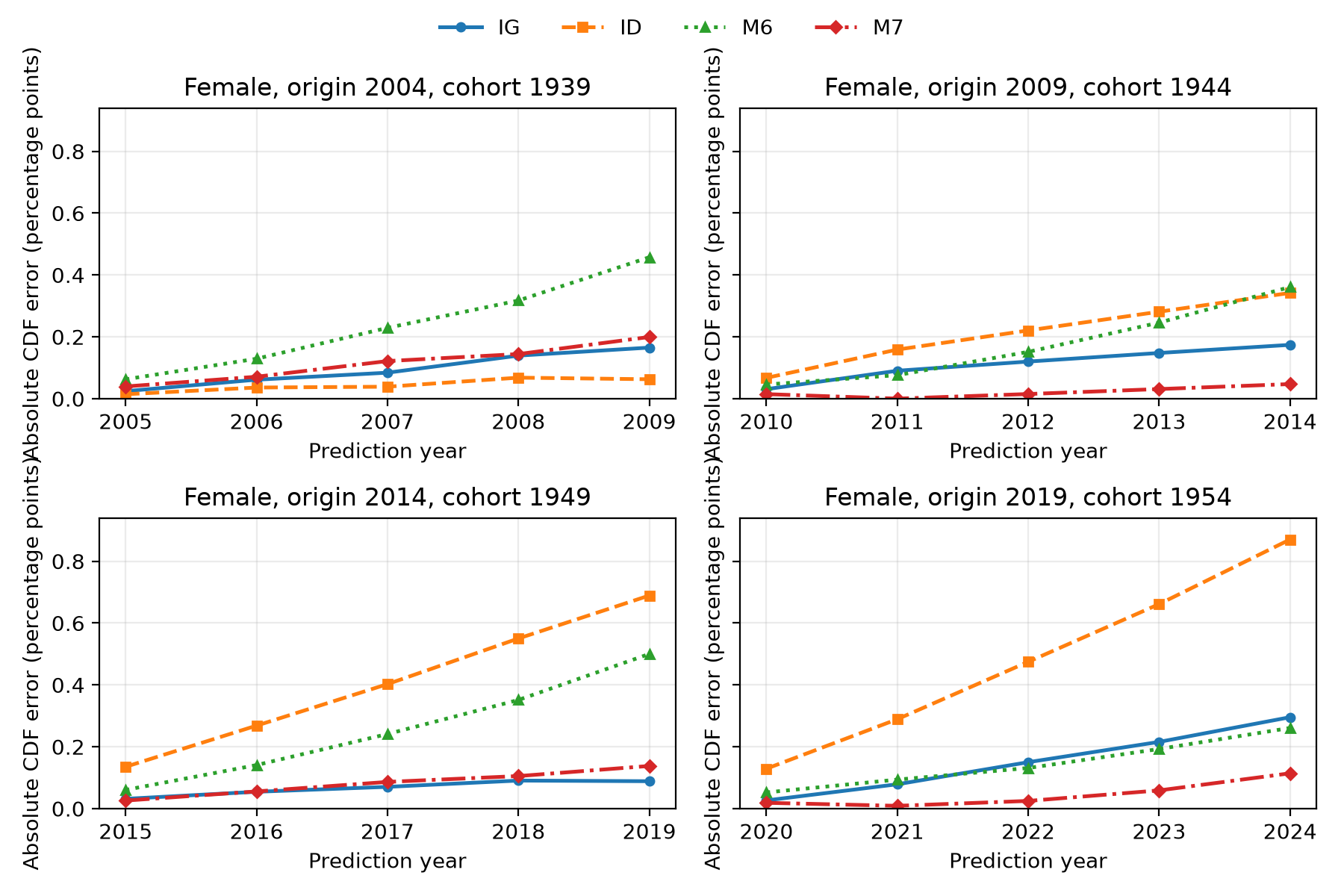}
\caption{Absolute errors in female conditional mortality CDFs (percentage points), using the same cohort rule, four origins and five prediction years as the preceding figure. Line styles and markers distinguish IG, ID, M6 and M7. Errors use the same predictions as the CDF plot.}\end{figure}
\FloatBarrier
Refining the ID grid from eight to sixteen steps per year changes the conditional CDF by at most $9.04\times10^{-5}$. IG predictions were also checked using finer integration and a separate implementation.

Across models, CDF RMSEs are approximately 0.16--0.45 percentage points and hence small on the probability scale. Many fitted curves are close visually. These results support the competitiveness of the IG specification against the CBD benchmarks, but do not establish a uniform advantage for GSEM. Differences remain, notably for female ID, and visual proximity alone does not imply equivalent predictive accuracy.

\subsection{Estimated damage and environmental responses}\label{sec:coefinterpretation}
\begin{table}[htbp]\centering
\caption{Selected IG parameter estimates at the 2019 origin, estimated jointly across cohorts of each sex. The coefficients $k,v$ use birth-year deviations measured in years.}
\begin{tabular}{lrr}\toprule Parameter & Male & Female\\\midrule
$\alpha$ & 0.068619 & 0.068915\\
$b$ & 9.784299 & 7.270629\\
$\kappa$ & 0.007144 & 0.004474\\
$k$ & -0.0020700 & -0.0072419\\
$v$ & -0.0018209 & 0.0000000\\
GDP: baseline/younger ages & 0.128054 & 0.141308\\
Physicians: baseline/younger ages & 0.000000 & 0.020404\\
GDP: age-related/older ages & 0.019179 & 0.052695\\
Physicians: age-related/older ages & 0.000000 & 0.004401\\
$\beta$ & -- & 0.091496\\
\bottomrule\end{tabular}\end{table}
A one-unit increase in standardized log GDP multiplies male IG baseline and age-related accumulation rates by approximately 0.880 and 0.981. The corresponding female multipliers are 0.868 at younger ages and 0.949 at older ages. Physician-density responses are smaller; male coefficients reach the lower bound of zero.

At the 2019 origin, ID coefficients for GDP and physician density are $(0.1310,0.1097)$ for males and $(0.1009,0.2324)$ for females. A one-unit standardized increase multiplies the diffusion coefficient by approximately 0.877 and 0.896 for males, and 0.904 and 0.793 for females, respectively. IG coefficients act on damage accumulation, whereas ID coefficients act on fluctuation amplitude. Their effects on mortality therefore cannot be compared from coefficient magnitudes alone. We next propagate changes in environmental paths through to mortality probabilities.

\subsection{Environmental scenarios and interpretation}\label{sec:envscenario}
Holding the coefficients estimated through 2019 fixed, we increase either GDP or physician density by 10\% in each year from 2020 to 2024. The baseline is the environmental path used in prediction evaluation. Only one covariate is changed at a time; past exposure, standardization, cohort effects and the female IG period effect remain fixed. The outcome is the five-year probability of death conditional on survival through the end of 2019.

Consider first the 1954 cohort, also shown in the CDF figures in Section~\ref{sec:predictioncomparison}. In the IG model, higher GDP reduces mean damage increments over the period by approximately 0.99\% for males and 1.05\% for females, with variance falling by the same proportions. Five-year death probabilities decline from 6.232\% to 6.165\% for males and from 2.550\% to 2.520\% for females. Mortality thus falls through slower damage accumulation. Higher physician density reduces the female probability from 2.550\% to 2.542\%, while the male probability is unchanged because its coefficients are zero.

In the ID model, higher physician density leaves the drift unchanged but reduces the diffusion coefficient by approximately 3.17\% for males and 6.60\% for females. Five-year death probabilities fall from 6.328\% to 6.121\% and from 1.975\% to 1.833\%, respectively. Here the reduction operates through smaller energy fluctuations rather than slower mean depletion. Whether lower diffusion reduces mortality depends on the drift and the age range considered.

\begin{figure}[htbp]
\centering
\includegraphics[width=\textwidth]{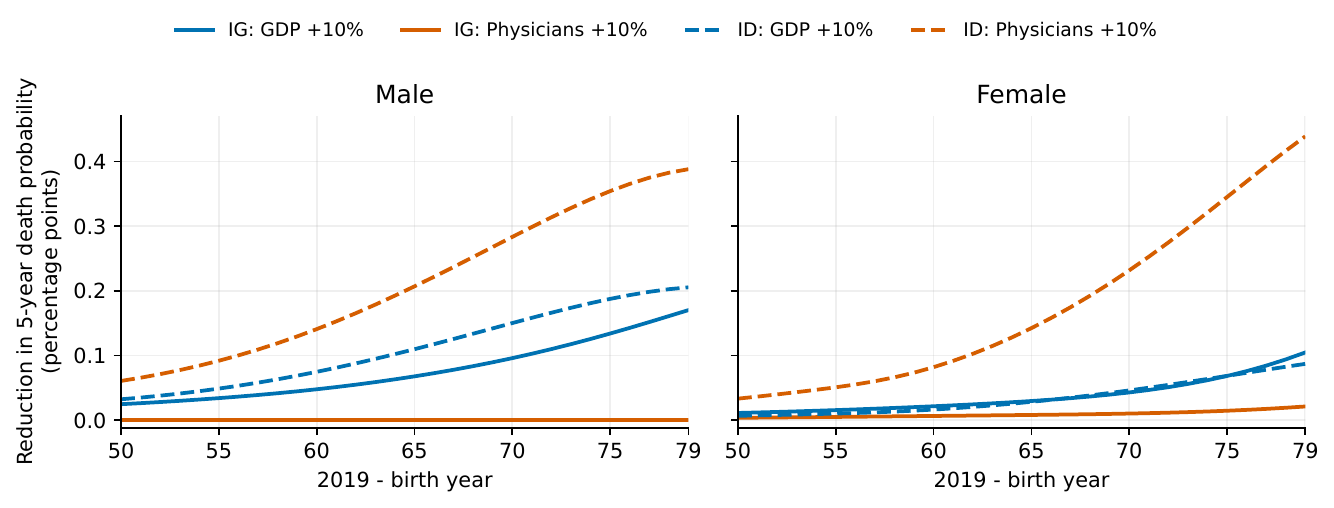}
\caption{Reductions in five-year death probabilities under alternative environmental paths. Baseline minus scenario probabilities are shown for 2020--2024, conditional on survival through 2019. GDP or physician density is raised by 10\% annually, holding fitted coefficients fixed. Males are on the left and females on the right; the legend is shared. Solid lines denote IG and dashed lines ID. The horizontal axis is 2019 minus birth year, with earlier cohorts to the right. Reductions are in percentage points.}
\label{fig:envscenario}
\end{figure}

Figure~\ref{fig:envscenario} extends the comparison to 30 cohorts per sex. In the IG model, reductions tend to be greater in older cohorts. Even a small response coefficient can produce a larger mortality change when baseline mortality or accumulated damage is greater. Age and cohort vary together in this figure, however, so these differences cannot be interpreted entirely as ageing effects.

The placement of ID covariates in diffusion was selected through predictive comparisons during development. Errors decreased relative to the covariate-free specification, while candidates that also included covariates in the drift did not improve performance. Within the specifications considered, this provides supporting evidence for environmental dependence in diffusion. Rankings vary by sex and evaluation period, however, and do not establish that environmental influences operate exclusively through diffusion.

These calculations trace a model-implied path from environmental changes through accumulated damage or fluctuations to mortality. This interpretability distinguishes GSEM from the direct mortality specification of CBD models. Environmental coefficients are constrained to be nonnegative, however, and parameter uncertainty is not quantified here. Responses also depend on the estimation period. A zero coefficient therefore does not imply that medical care is ineffective, nor are scenario reductions causal estimates of the benefits of environmental improvement.

\FloatBarrier

\section{Conclusion}
We have developed an environmental extension of survival energy models that derives cohort lifetime distributions from the exhaustion of latent energy. The framework accommodates exponential initial heterogeneity, diffusion and jumps. Numerical exploration with Japanese mortality data led to two fixed-initial-energy IG specifications: separate baseline and age-related environmental responses for males, and age-dependent environmental and period responses for females. A fixed-initial-energy ID model with an age change point was fitted to the same data for comparison.

The selected IG specifications have lower aggregate RMSE than the selected ID models for both sexes, supporting their use as the primary specifications for these data. The ranking need not extend to other populations or functional forms. Across the four origins, male IG has lower aggregate conditional CDF error than M6 and M7. Female IG improves with age-dependent responses but remains less accurate than M7. Even for males, the benchmarks perform better at some origins. Errors are small on the probability scale and many CDF curves are close, so model usefulness should be assessed through both predictive performance and the interpretation of predictions.

The structural contribution of GSEM is to link environmental variation to mean depletion or fluctuation amplitude. Scenario calculations can then identify model-implied conditions associated with lower mortality. The results do not establish that adding covariates always improves prediction, and fitted coefficients describe statistical associations rather than causal mechanisms. Performance is conditional on realized environmental paths and on the Poisson log-link implementation used for the CBD benchmarks.

Further work should evaluate the selected models on new data, including settings in which future environmental paths must themselves be predicted. Parameter uncertainty, extrapolation of the female period effect and early-life covariate imputation also require investigation. These issues are particularly relevant to longer-horizon forecasts of old-age mortality and the assessment of longevity risk.

\appendix
\renewcommand{\theequation}{\Alph{section}.\arabic{equation}}
\renewcommand{\thetheorem}{\Alph{section}.\arabic{theorem}}

\section{Diffusion notation and regularity}\label{app:diffusion}
This appendix collects the notation and assumptions used in Appendix~\ref{app:integral}.
Both appendices concern diffusions without jumps. Fix initial energy $\xi=h>0$ and an environmental path $Z=z$, and write $v_z(t):=\s_z(t)^2$. Then
\begin{equation}
 X_t^z=h+\int_0^t b_z(s)\,\df s+\int_0^t\s_z(s)\,\df W_s,
 \quad M_z(t)=\int_0^t b_z(s)\,\df s,\quad A_z(t)=\int_0^t v_z(s)\,\df s
 \label{eq:conditionaldiff}
\end{equation}
Unless stated otherwise, $F(t\mid h,z)$, $f(t\mid h,z)$, $S(t\mid h,z)$ and $p_{h,z}(t,x)$ refer to fixed initial energy $h$, with probability and expectation denoted by $\P^{h,z}$ and $\E^{h,z}$. Exponential initial energy is handled by integration in \eqref{eq:initialmix}. The coefficient $b_z$ is the net drift; in the baseline model without adjustment, $b_z=-\mu_0/\rho$.

\begin{assumption}[Regularity of conditional coefficients]\label{ass:reg}
For fixed $z$ and $\th$, the functions $b_z$ and $v_z$ are continuous on every finite interval $[0,T]$, and satisfy
\[
 0<\ul v_T\le v_z(t)\le\ol v_T<\infty,
 \qquad \sup_{0\le t\le T}|b_z(t)|<\infty
\]
Whenever classical PDE solutions or derivatives of densities are used, impose sufficient additional smoothness, such as $b_z,v_z\in C^1([0,T])$, to justify the relevant differentiation and interchange of integrals.
\end{assumption}
For piecewise coefficients, as with annual data, calculations are performed on each interval of continuity and joined at the break points. If coefficients are smoothed by interpolation, the resulting approximation error must also be assessed.

Conditional on the environmental path, the coefficients are deterministic. The It\^o integral $\int_0^t\s_z(s)\,\df W_s$ is therefore normal with mean zero and variance $A_z(t)$, giving
\begin{equation}
 X_t^z\sim N(h+M_z(t),A_z(t))\quad\text{under }\P^{h,z}.
 \label{eq:normal}
\end{equation}

\section{Integral equations for diffusion first-passage distributions}\label{app:integral}
Decomposing paths according to their first hitting time of zero yields an integral equation for the lifetime distribution. We derive that equation and state the conditions used in its numerical evaluation.
\subsection{A Fortet-type Volterra equation of the first kind}
The equation below is the Brownian boundary-crossing equation in Section~2 of \citet{JKV2009}, expressed in the original age variable of \eqref{eq:conditionaldiff}. Related numerical methods are discussed in Appendix~A of \citet{Smith2023}. We give a direct proof using the strong Markov property.

\begin{theorem}[Integral equation for the distribution function]\label{thm:fortet}
Under Assumption~\ref{ass:reg}, for $t>0$,
\begin{equation}
 Q_{h,z}(t)=\int_{[0,t]}K_z(t,s)\,\df F(s\mid h,z),\label{eq:fortet}
\end{equation}
where
\begin{align}
 Q_{h,z}(t)&=\Phi\!\l(-\frac{h+M_z(t)}{\sqrt{A_z(t)}}\r),\label{eq:H}\\
 K_z(t,s)&=\Phi\!\l(-\frac{M_z(t)-M_z(s)}{\sqrt{A_z(t)-A_z(s)}}\r)
 \quad(0\le s<t),\label{eq:K}\\
 K_z(t,t)&=\frac12.\nonumber
\end{align}
If $F(\cdot\mid h,z)$ has a density $f(\cdot\mid h,z)$, the right-hand side equals $\int_0^tK_z(t,s)f(s\mid h,z)\,\df s$.
\end{theorem}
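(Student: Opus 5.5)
The plan is to compute $\P^{h,z}(X_t^z\le 0)$ in two ways. By \eqref{eq:normal}, $X_t^z\sim N(h+M_z(t),A_z(t))$, so
\[
\P^{h,z}(X_t^z\le0)=\Phi\bigl(-(h+M_z(t))/\sqrt{A_z(t)}\bigr)=Q_{h,z}(t).
\]
On the other hand, the event $\{X_t^z\le0\}$ is contained in $\{\tau_z\le t\}$, because the path must reach zero before or at time $t$. Hence
\[
Q_{h,z}(t)=\E^{h,z}\bigl[\I\{\tau_z\le t\}\,\P(X_t^z\le0\mid\mathcal F_{\tau_z})\bigr].
\]

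The next step is to evaluate the inner conditional probability with the strong Markov property at the stopping time $\tau_z$. The process $X^z$ is a time-inhomogeneous diffusion with deterministic coefficients. By path continuity, $X_{\tau_z}^z=0$ on $\{\tau_z<\infty\}$. Given $\mathcal F_{\tau_z}$ on $\{\tau_z=s\}$, the increment
\[
X_t^z-X_s^z=M_z(t)-M_z(s)+\int_s^t\s_z(u)\,\df W_u
\]
is independent of $\mathcal F_s$ and is $N(M_z(t)-M_z(s),A_z(t)-A_z(s))$. So on $\{\tau_z=s<t\}$ the conditional probability is
\[
\Phi\bigl(-(M_z(t)-M_z(s))/\sqrt{A_z(t)-A_z(s)}\bigr)=K_z(t,s).
\]
Here $A_z(t)-A_z(s)\ge\ul v_T(t-s)>0$ by Assumption~\ref{ass:reg}. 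To justify this cleanly I would use the strong Markov property of Brownian motion: $\tilde W_u=W_{\tau_z+u}-W_{\tau_z}$ is a Brownian motion independent of $\mathcal F_{\tau_z}$. I then write $\int_{\tau_z}^t\s_z\,\df W$ as a Wiener integral of the deterministic function $\s_z(\tau_z+\cdot)$ against $\tilde W$, and apply a freezing lemma on $\{\tau_z\le t\}$.

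Integrating the conditional probability against the law of $\tau_z$ gives $\int_{[0,t)}K_z(t,s)\,\df F(s\mid h,z)$, plus the contribution of the atom $\{\tau_z=t\}$. That atom has probability zero, since $\P(X_t^z=0)=0$ and $\tau_z=t$ forces $X_t^z=0$. The value $K_z(t,t)=1/2$ is therefore only a convention and does not affect the integral. There is also no atom at $s=0$, because $h>0$. When a density exists, the Stieltjes integral reduces to $\int_0^tK_z(t,s)f(s\mid h,z)\,\df s$.

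The main obstacle is the rigorous strong-Markov conditioning for an inhomogeneous diffusion at a random time. I would handle it through the Brownian restart just described rather than through abstract Markov-process theory. Measurability of $(s,\omega)\mapsto K_z(t,s)$ is immediate from continuity in $s<t$, and $K_z$ is bounded by $1$, so dominated convergence handles the behaviour as $s\uparrow t$.
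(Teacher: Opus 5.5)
Your proposal is correct and follows the paper's proof. Both decompose $\P^{h,z}(X_t^z\le0)$ at the first hitting time via the strong Markov property, identify the left-hand side through the normal law of $X_t^z$, and discard the atom at $t$ using $\{\tau_z=t\}\subset\{X_t^z=0\}$ and $\tau_z>0$. The only difference is that you justify the restart explicitly through the Brownian motion $W_{\tau_z+\cdot}-W_{\tau_z}$ and a freezing argument, whereas the paper cites the strong Markov property of the time-augmented diffusion; this is a more detailed rendering of the same step.
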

\begin{proof}
Consider the unabsorbed process $X^z$. Continuity and $h>0$ imply that $X_t^z\le0$ entails $\tau_z\le t$. At a first hitting time $s<t$, $X_s^z=0$. By the strong Markov property of the time-augmented diffusion with deterministic coefficients, its transition distribution from that point to time $t$ is
\[
 N\bigl(M_z(t)-M_z(s),\ A_z(t)-A_z(s)\bigr)
\]
Decomposition at the first hitting time consequently gives
\[
 \P^{h,z}(X_t^z\le0)
 =\int_{[0,t]}\Phi\!\l(-\frac{M_z(t)-M_z(s)}{\sqrt{A_z(t)-A_z(s)}}\r)\,\df F(s\mid h,z).
\]
By \eqref{eq:normal}, the left-hand side is $Q_{h,z}(t)$. For fixed $t>0$, $\{\tau_z=t\}\subset\{X_t^z=0\}$, and the nondegenerate normal distribution of $X_t^z$ implies $\P^{h,z}(\tau_z=t)=0$. Also, $\tau_z>0$ almost surely. Endpoint values therefore do not affect the integral. Continuity of the coefficients gives $M_z(t)-M_z(s)=O(t-s)$ and $A_z(t)-A_z(s)\asymp t-s$, so $K_z(t,s)\to1/2$ as $s\uparrow t$.
\end{proof}

\begin{remark}[Endpoint versus first-passage probabilities]
The quantity $Q_{h,z}(t)$ is the probability that the unabsorbed process is nonpositive at time $t$, not the first-passage probability $F(t\mid h,z)$. Paths may return above zero after hitting the boundary. For example, if $b_z\equiv0$, then $K_z\equiv1/2$ and $F(t\mid h,z)=2Q_{h,z}(t)$.
\end{remark}

\subsection{General initial distributions}\label{sec:diffusiondistribution}
Averaging Theorem~\ref{thm:fortet} over initial energy accommodates individual heterogeneity without restricting the functional forms of $b_z$ and $v_z$.

\begin{corollary}[Mixtures over initial energy]\label{cor:mixedfortet}
Under Assumption~\ref{ass:reg}, let $\xi>0$ have distribution $\nu$ on $(0,\infty)$, independently of the Brownian motion and environment. Define
\[
 F(t\mid z)=\int_{(0,\infty)}F(t\mid h,z)\,\nu(\df h),\qquad
 Q_{\nu,z}(t)=\int_{(0,\infty)}Q_{h,z}(t)\,\nu(\df h)
\]
Then, with the same kernel $K_z$ as in Theorem~\ref{thm:fortet},
\begin{equation}
 Q_{\nu,z}(t)=\int_{[0,t]}K_z(t,u)\,\df F(u\mid z)
 \label{eq:mixedfortet}
\end{equation}
holds.
\end{corollary}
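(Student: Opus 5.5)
The plan is to integrate the identity of Theorem~\ref{thm:fortet} against $\nu(\df h)$ and then move the $h$-integral inside the Stieltjes integral on the right-hand side. Fix $t>0$ and $z$. For every $h>0$, Theorem~\ref{thm:fortet} gives $Q_{h,z}(t)=\int_{[0,t]}K_z(t,s)\,\df F(s\mid h,z)$. Integrating the left-hand side over $h$ yields $Q_{\nu,z}(t)$ by definition. Measurability of $h\mapsto F(s\mid h,z)$ and of $h\mapsto Q_{h,z}(t)$ is clear: $Q_{h,z}(t)$ is continuous in $h$, and $F(s\mid h,z)=\P(\inf_{u\le s}(h+M_z(u)+\int_0^u\s_z\,\df W)\le0)$ is monotone in $h$. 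Both quantities are bounded by one, so all integrals are finite.

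The key step is the interchange
\[
\int_{(0,\infty)}\int_{[0,t]}K_z(t,s)\,\df F(s\mid h,z)\,\nu(\df h)=\int_{[0,t]}K_z(t,s)\,\df F(s\mid z).
\]
I would first verify this for indicator kernels $K=\I_{[0,r]}$ with $r\le t$. There it reduces to the definition $F(r\mid z)=\int F(r\mid h,z)\,\nu(\df h)$. The atom at $0$ causes no trouble, since $F(0\mid h,z)=0$ because $\tau_z>0$ almost surely. Linearity then extends the identity to step functions. A monotone class or dominated convergence argument extends it to bounded Borel functions of $s$ on $[0,t]$. This needs $s\mapsto K_z(t,s)$ to be bounded and Borel, which holds because $K_z$ takes values in $[0,1]$ and is continuous on $[0,t)$.

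A cleaner alternative is to read each Stieltjes integral as an expectation. Then $\int K_z(t,s)\,\df F(s\mid h,z)=\E^{h,z}[K_z(t,\tau_z)\I\{\tau_z\le t\}]$. Since $\xi$ is independent of $W$ and of the environment, averaging over $\nu$ gives $\E^z[K_z(t,\tau_z)\I\{\tau_z\le t\}]$ under the mixed law. The law of $\tau_z$ under that mixed law has CDF $F(\cdot\mid z)$, so this expectation equals the right-hand side of \eqref{eq:mixedfortet}. Fubini applies because the integrand is nonnegative and bounded by one.

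The main obstacle is the endpoint $s=t$, where $K_z(t,t)=1/2$ is defined by convention rather than by the formula. I would handle it with the argument from Theorem~\ref{thm:fortet}: $\P^{h,z}(\tau_z=t)=0$ for each $h$, so $\P^z(\tau_z=t)=\int\P^{h,z}(\tau_z=t)\,\nu(\df h)=0$. The mixed distribution therefore has no atom at $t$, and the value of $K_z(t,t)$ does not affect the integral. Measurability of $(h,s)\mapsto$ the joint law should be made explicit by constructing $\xi$ and $W$ on a product space, which is what the independence assumption provides.
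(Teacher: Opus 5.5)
Your proposal is correct and follows the paper's own argument: the paper integrates both sides of Theorem~\ref{thm:fortet} against $\nu(\df h)$ and interchanges the integrals by nonnegativity, and your indicator/monotone-class (or expectation) argument spells out the step it leaves implicit, namely that the $\nu$-mixture of the Stieltjes measures $\df F(\cdot\mid h,z)$ is $\df F(\cdot\mid z)$. The endpoint discussion is harmless but not needed, because that interchange holds for any bounded Borel kernel, whatever value it takes at $s=t$.
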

Integrate both sides of Theorem~\ref{thm:fortet} with respect to $\nu(\df h)$ and interchange the integrals by nonnegativity. The first-passage decomposition also applies to piecewise continuous coefficients if, on every finite interval, the drift is bounded and the variance coefficient is bounded above and away from zero.

\paragraph{Exponential initial energy.}
For $\nu(\df h)=\rho e^{-\rho h}\,\df h$, $\rho>0$, write $Q_{\rho,z}$ for $Q_{\nu,z}$. Integration by parts and completion of the square give
\begin{align}
 Q_{\rho,z}(t)
 &=\int_0^\infty\Phi\!\left(-\frac{h+M_z(t)}{\sqrt{A_z(t)}}\right)\rho e^{-\rho h}\,\df h\notag\\
 &=\Phi\!\left(-\frac{M_z(t)}{\sqrt{A_z(t)}}\right)
 -e^{\rho M_z(t)+\rho^2A_z(t)/2}
 \Phi\!\left(-\frac{M_z(t)+\rho A_z(t)}{\sqrt{A_z(t)}}\right)
 \label{eq:gmid_mixedbelow}
\end{align}
This is still an endpoint probability for the unabsorbed process, not a first-passage CDF. Substituting it into \eqref{eq:mixedfortet} and solving the integral equation yields the CDF. Appendix~\ref{app:quadrature} describes the numerical procedure.

The diffusion density equation is obtained by removing the jump term from \eqref{eq:PIDE} in Appendix~\ref{app:generaldistribution}. The initial distribution is a point mass at $h$ for fixed initial energy and has density $\rho e^{-\rho x}$ in the exponential case.

\subsection{Numerical solution}\label{app:quadrature}
Set $Q_z=Q_{h,z}$ for fixed initial energy and $Q_z=Q_{\nu,z}$ for a mixture. On a grid $0=t_0<\cdots<t_N$, let $s_j=(t_{j-1}+t_j)/2$ and $\Delta F_{j,z}=F(t_j\mid z)-F(t_{j-1}\mid z)$. Midpoint discretization of the first-kind equation gives
\[
 \Delta F_{n,z}\simeq
 \frac{Q_z(t_n)-\sum_{j<n}K_z(t_n,s_j)\Delta F_{j,z}}
 {K_z(t_n,s_n)},\qquad
 F(t_n\mid z)=\sum_{j=1}^n\Delta F_{j,z}
\]
For fixed initial energy, replace $F(t\mid z)$ by $F(t\mid h,z)$. Grid refinement is used to assess numerical error, together with checks that the CDF is nondecreasing and lies in $[0,1]$. The ID grids are specified in Section~\ref{sec:empiricalid}; numerical comparisons are reported in Section~\ref{sec:predictioncomparison}.

\section{First-passage representations for the general model}\label{app:generaldistribution}
We now consider the general jump-diffusion model containing the specifications of Section~3. Under monotone depletion, survival is determined by comparing cumulative loss with initial energy. With recovery or diffusion, earlier boundary crossings must also be accounted for. We extend the exponential-tail argument of Section~\ref{sec:randominitial} to the maximum cumulative loss.
Conditional on $z$, let $D^z$ be a nondecreasing process with independent increments, independent of the Brownian motion. Its jumps are described by a L\'{e}vy measure kernel $\Pi_{t,z}(dy)$ satisfying, for every finite $T$,
\[
 \int_0^T\int_{(0,\infty)}(1\wedge y)\Pi_{t,z}(dy)\,dt<\infty
\]
Deterministic depletion is included in the drift. In the finite-activity case, $\Pi_{t,z}(dy)=\lambda_z(t)G_{t,z}(dy)$, where $\lambda_z(t)$ is the arrival intensity and $G_{t,z}$ the distribution of jump sizes. For the IG model of Section~3,
\[
 \Pi_{t,z}(dy)=\Lambda'_z(t)\sqrt{\frac{\kappa}{2\pi}}\,y^{-3/2}e^{-\kappa y/2}\,dy,\qquad y>0
\]
The process has infinitely many small jumps but finite accumulated damage on finite intervals. Write $b_z(t)=-\mu_0(t)/\rho+a_z(t)$ and $v_z(t)=\sigma_z(t)^2$.

\subsection{A path representation with initial energy integrated out}
For a fixed path $z$, let $N_z(t)=\int_0^t\s_z(s)\,\df W_s$ and $M_z(t)=\int_0^t b_z(s)\,\df s$. Define the maximum cumulative loss by
\begin{equation}
 R_z(t)=\sup_{0\le s\le t}\{D_s^z-M_z(s)-N_z(s)\}\ge0
\end{equation}
Cumulative damage is finite and continuous components are bounded on finite intervals, so $R_z(t)$ is finite. The representation accommodates both finite and infinite jump activity, provided jumps represent downward energy losses.
\begin{proposition}[Survival function of the general model]\label{prop:survivalrepresentation}
Fix $z$ and assume exponential initial energy $\xi$ is independent of $(W,D^z)$. Then
\begin{equation}
 S_\th(t\mid z)=\E_\th^z[e^{-\rho R_z(t)}].\label{eq:expsup}
\end{equation}
Moreover, in terms of fixed-initial-energy distributions,
\begin{equation}
 F_\th(t\mid z)=\int_0^\infty F(t\mid h,z)\rho e^{-\rho h}\,\df h\label{eq:initialmix}
\end{equation}
holds.
\end{proposition}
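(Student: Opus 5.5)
The plan is to rewrite the event $\{\tau_z>t\}$ in terms of the maximum cumulative loss. Then we condition on the driving inputs and use the exponential tail of $\xi$.

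Fix $z$ and write the unabsorbed energy as $X_s^z=\xi+M_z(s)+N_z(s)-D_s^z$. Since $\tau_z=\inf\{s\ge0:X_s^z\le0\}$, survival beyond $t$ means $X_s^z>0$ for all $s\in[0,t]$, that is, $\xi>D_s^z-M_z(s)-N_z(s)$ for every $s\le t$.

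The first step is to show that this event coincides, up to a null set, with $\{\xi>R_z(t)\}$. The inclusion $\{\xi>R_z(t)\}\subset\{\tau_z>t\}$ is immediate from the definition of the supremum. The converse is the delicate point, because a supremum need not be attained.

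To handle it, use that the paths of $D^z$ are nondecreasing and càdlàg, while $M_z+N_z$ is continuous. Hence $L_s:=D_s^z-M_z(s)-N_z(s)$ is càdlàg with only upward jumps. On $[0,t]$ its supremum is either attained at some $s$, or approached along $s_n\uparrow s$, in which case it equals the left limit $L_{s-}\le L_s$. Either way the supremum is attained, so $\{\tau_z>t\}=\{\xi>R_z(t)\}$ exactly.

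If one prefers to avoid path arguments, note that $\xi$ has a continuous distribution and is independent of $R_z(t)$. Then $\P(\xi=R_z(t))=0$, and the distinction between strict and weak inequality is immaterial. This is the step I expect to need most care; the tie-breaking via independence of the atomless $\xi$ is the fallback.

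Next I would condition on $\mathcal G=\sigma(W,D^z)$. Since $\xi$ is independent of $\mathcal G$ and $R_z(t)\ge0$ is $\mathcal G$-measurable and finite, as noted before the proposition, the exponential tail gives $\P(\xi>R_z(t)\mid\mathcal G)=e^{-\rho R_z(t)}$. Taking expectations yields \eqref{eq:expsup}. Measurability of $R_z(t)$ follows because, by right-continuity, the supremum can be taken over rationals in $[0,t]$ together with $t$.

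For \eqref{eq:initialmix}, condition instead on $\xi=h$. Independence of $\xi$ from $(W,D^z)$ means that the conditional law of the path given $\xi=h$ is the law of the fixed-initial-energy process. Hence $\P(\tau_z\le t\mid\xi=h)=F(t\mid h,z)$, and $F(t\mid h,z)=\P(R_z(t)\ge h)$ is measurable in $h$. Integrating against $\rho e^{-\rho h}\,\df h$ by Fubini, which applies since the integrand lies in $[0,1]$, gives the mixture formula. As a consistency check, $\int_0^\infty \P(R_z(t)<h)\rho e^{-\rho h}\,\df h=\E[e^{-\rho R_z(t)}]$, which recovers \eqref{eq:expsup}.
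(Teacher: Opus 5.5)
Your proposal is correct and follows the paper's proof. Like the paper, you identify $\{\tau_z>t\}$ with $\{\xi>R_z(t)\}$, condition on $(W,D^z)$ to apply the independent exponential tail, and obtain \eqref{eq:initialmix} by conditioning on $\xi$. The paper settles the strict-versus-weak inequality only up to a null event, using your fallback that $\xi$ is atomless and independent of $R_z(t)$; your c\`adl\`ag argument, using that $L$ has only upward jumps so the supremum is attained, is a valid strengthening that gives exact equality.
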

\begin{proof}
Survival to time $t$ requires $\xi>D_s^z-M_z(s)-N_z(s)$ for all $s\le t$, equivalently $\xi>R_z(t)$ up to a null event under the continuous initial distribution.
Conditioning on $z,W,D^z$, applying the independent exponential tail and taking expectations proves \eqref{eq:expsup}.
Equation~\eqref{eq:initialmix} follows by conditioning on $\xi$.
\end{proof}
In general, $\P(X_t^z>0)$ is insufficient to determine survival, because the unabsorbed process may recover after an earlier boundary crossing.
Equation~\eqref{eq:expsup} is an exact expectation representation that analytically integrates out initial energy. It is useful for path simulation, but is not a closed-form solution because the expectation over paths remains to be evaluated. The mortality CDF is $F_\th(t\mid z)=1-S_\th(t\mid z)$.

\subsection{Monotone damage and identification}
\begin{remark}[Identifying latent damage mechanisms]\label{rem:monotoneidentification}
Without diffusion, if $C_z(t)=\int_0^t\{\mu_0(s)/\rho-a_z(s)\}\,\df s$ is nondecreasing, integration over independent exponential initial energy yields
\[
 S(t\mid z)=\exp\l\{-\rho C_z(t)-\int_0^t\Psi_{s,z}(\rho)\,\df s\r\},\qquad
 \Psi_{s,z}(q)=\int_{(0,\infty)}(1-e^{-qy})\Pi_{s,z}(\,\df y)
\]
This follows from the survival event $\{\xi>C_z(t)+D_t^z\}$ and the Laplace transform of damage. Under finite activity, $\Psi_{s,z}(\rho)=\la_z(s)\{1-\psi_{s,z}(\rho)\}$, where $\psi_{s,z}(\rho)=\E[e^{-\rho Y_s}\mid z]$ is the jump-size Laplace transform. Mortality distributions therefore do not separately identify arrival intensity and the jump-size distribution. For example, $a=0$, $\mu_0(t)=ge^{kt}$, constant intensity $\la$ and exponential jump sizes $Y\sim\operatorname{Exp}(\b)$ give the force of mortality $ge^{kt}+\la\rho/(\b+\rho)$. Distinct latent mechanisms can thus produce the same Gompertz--Makeham distribution. Good fit alone does not establish the biological reality of a mechanism. The simplification depends on monotonicity and generally fails when recovery is allowed.
\end{remark}

\subsection{The surviving-energy density and death-time density}
Conditional on $z$, the state variable $x$ is one-dimensional. Define the subprobability density of surviving energy by
\[
 \P_\th^z(X_t^z\in B,\tau_z>t)=\int_B p_z(t,x)\,\df x,
 \qquad B\subset(0,\infty)\text{ Borel}.
\]
Its integral is the survival probability $S_\th(t\mid z)$ rather than one. Absorption removes paths that have already died. If $v_z>0$, a sufficiently smooth density exists, and the required integrals and boundary limits are justified, then
\begin{align}
 \p_t p_z(t,x)
 &=-b_z(t)\p_xp_z(t,x)+\frac{v_z(t)}2\p_{xx}p_z(t,x)\notag\\
 &\quad+\int_{(0,\infty)}[p_z(t,x+y)-p_z(t,x)]\Pi_{t,z}(\,\df y),\quad x>0,
 \label{eq:PIDE}\\
 p_z(t,0)&=0\quad(t>0),\qquad p_z(0,x)=\rho e^{-\rho x}\quad(x>0)
 \label{eq:initialPDE}
\end{align}
This is a partial integro-differential equation (PIDE). The integral represents inflow from $x+y$ following a loss $y$, minus outflow from $x$. Under infinite activity these terms may diverge separately, so their difference must remain inside the integral. Under finite activity the term reduces to $\la_z[\int p_z(t,x+y)G_{t,z}(\,\df y)-p_z(t,x)]$.
The exponential initial density has a positive right limit at zero. The initial condition is therefore imposed in $L^1$ or in a weak sense, without requiring continuity at $(0,0)$.

\begin{proposition}[Forward equation and death-time density]\label{prop:forwarddeath}
Assume the stated regularity and integrability conditions and vanishing flux at infinity. For $\ol\Pi_{t,z}(x)=\Pi_{t,z}([x,\infty))$, $x>0$,
\begin{align}
 S_\th(t\mid z)&=\int_0^\infty p_z(t,x)\,\df x,\notag\\
 f_\th(t\mid z)&=\frac{v_z(t)}2\p_xp_z(t,0+)
 +\int_0^\infty p_z(t,x)\ol\Pi_{t,z}(x)\,\df x.
 \label{eq:jumpflux}
\end{align}
\end{proposition}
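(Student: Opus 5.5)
The survival identity is immediate from the definition of the subprobability density. Taking $B=(0,\infty)$ in $\P_\th^z(X_t^z\in B,\tau_z>t)=\int_Bp_z(t,x)\,\df x$ gives $\P_\th^z(\tau_z>t)=S_\th(t\mid z)$, because on $\{\tau_z>t\}$ the process has never reached $(-\infty,0]$, so $X_t^z>0$. The main work is the density formula. I would obtain it by integrating the PIDE \eqref{eq:PIDE} in $x$ over $(0,\infty)$ and then using $f_\th(t\mid z)=-\partial_tS_\th(t\mid z)$. Interchanging $\partial_t$ with the $x$-integral is justified by the assumed regularity and integrability.

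I would treat the three terms on the right of \eqref{eq:PIDE} separately, integrating first over $(\varepsilon,L)$ and then letting $\varepsilon\downarrow0$ and $L\to\infty$.

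\emph{Drift term.} We have $-b_z(t)\int\partial_xp_z\,\df x=-b_z(t)[p_z(t,L)-p_z(t,\varepsilon)]$. This tends to $0$ by the absorbing boundary condition $p_z(t,0)=0$ and vanishing flux at infinity.

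\emph{Diffusion term.} Similarly, $\frac{v_z(t)}2[\partial_xp_z(t,L)-\partial_xp_z(t,\varepsilon)]\to-\frac{v_z(t)}2\partial_xp_z(t,0+)$.

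\emph{Jump term.} This is the delicate step, because under infinite activity (as in the IG case) $\int p_z(t,x+y)\Pi(\df y)$ and $p_z(t,x)\Pi((0,\infty))$ diverge separately, so they cannot simply be split. I would work with the combined integrand and apply Fubini/Tonelli to the truncated version. The aim is to show
\[
\int_0^\infty\!\!\int_{(0,\infty)}[p_z(t,x+y)-p_z(t,x)]\Pi_{t,z}(\df y)\,\df x=-\int_0^\infty p_z(t,x)\ol\Pi_{t,z}(x)\,\df x .
\]
To do this, fix $y$ first. Then $\int_0^\infty[p_z(t,x+y)-p_z(t,x)]\,\df x=-\int_0^yp_z(t,x)\,\df x$. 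This inner integral is bounded by $\min\{y\sup p_z,\,S_\th(t\mid z)\}\le C(1\wedge y)$, and therefore it is $\Pi$-integrable by the condition $\int(1\wedge y)\Pi(\df y)<\infty$. This bound is what legitimises exchanging the order of integration. Here I would first restrict to $y\ge\delta$, where $\Pi$ is finite and Tonelli applies to each piece, and then let $\delta\downarrow0$ by dominated convergence using the bound $C(1\wedge y)$. This requires $p_z(t,\cdot)$ to be bounded near $0$, which is part of the stated regularity. Finally, Tonelli gives $\int_{(0,\infty)}\int_0^yp_z(t,x)\,\df x\,\Pi(\df y)=\int_0^\infty p_z(t,x)\Pi([x,\infty))\,\df x$, which is the claimed right-hand side.

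Collecting the three terms gives $\partial_tS_\th=-\frac{v_z}2\partial_xp_z(t,0+)-\int_0^\infty p_z\ol\Pi_{t,z}\,\df x$, and negating yields \eqref{eq:jumpflux}. The two terms have a direct interpretation. The first is the continuous flux through the boundary at $0$, which is nonnegative since $p_z\ge0$ and $p_z(t,0)=0$. The second is the rate at which a jump of size at least the current energy $x$ kills the process.

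I expect the main obstacle to be this jump-term interchange in the infinite-activity case. In particular, the domination argument needs $p_z(t,\cdot)$ bounded near $0$, or at least $\int_0^yp_z\,\df x=O(y)$, uniformly enough to differentiate under the integral in $t$.
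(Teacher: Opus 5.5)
This is essentially the paper's proof: integrate \eqref{eq:PIDE} over $x>0$ so that the diffusion term leaves the boundary flux $-\tfrac12v_z(t)\p_xp_z(t,0+)$, truncate the jumps to $y>\ve$ to get $-\int_{(\ve,\infty)}\int_0^yp_z(t,u)\,\df u\,\Pi_{t,z}(\df y)$, let $\ve\downarrow0$ and apply Tonelli, then use $f_\th=-\p_tS_\th$; the only difference is that the paper first derives the weak form of \eqref{eq:PIDE} from It\^o's formula for $\vp(X_t^z)\I_{\tau_z>t}$, whereas you take the equation as given, which the stated hypotheses allow. One small precision: the signed bound $\int_0^yp_z\,\df x\le C(1\wedge y)$ controls only the $y$-first side, so passing to the limit $\delta\downarrow0$ on the $x$-first side needs $\int_0^\infty\lvert p_z(t,x+y)-p_z(t,x)\rvert\,\df x\le C(1\wedge y)$ (for example from $\p_xp_z(t,\cdot)\in L^1$); this falls under the assumed integrability, exactly as in the paper's ``under the assumed integrability''.
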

\begin{proof}
Extend a test function $\vp\in C_c^\infty((0,\infty))$ by zero on $x\le0$. The generator before absorption is
\[
 \A_{t,z}\vp(x)=b_z(t)\vp'(x)+\tfrac12v_z(t)\vp''(x)
 +\int[\vp(x-y)-\vp(x)]\Pi_{t,z}(\,\df y).
\]
Write $\I_A$ for the indicator of event $A$. Apply It\^o's formula with jumps up to the death time and take expectations of $\vp(X_t^z)\I_{\tau_z>t}$. The test function vanishes at absorption. Integration by parts and a change of variables yield the weak form of \eqref{eq:PIDE}.
Integrating over $x>0$, the continuous component gives boundary flux $-v_z\p_xp_z(t,0+)/2$.
For jumps restricted initially to $y>\ve$,
\[
 \begin{aligned}
 &\int_{(\ve,\infty)}\int_0^\infty
 [p_z(t,x+y)-p_z(t,x)]\,\df x\,\Pi_{t,z}(\,\df y)
 \\ &=-\int_{(\ve,\infty)}\int_0^y p_z(t,u)\,\df u\,\Pi_{t,z}(\,\df y)
 \end{aligned}
\]
Under the assumed integrability, letting $\ve\downarrow0$ and applying Tonelli's theorem gives
\[ -\int_0^\infty p_z(t,u)\ol\Pi_{t,z}(u)\,\df u. \]
The result follows from $-\partial_tS_\th(t\mid z)=f_\th(t\mid z)$.
\end{proof}
The first term represents deaths through continuous boundary hitting and the second deaths through jumps across the boundary.
When $\s=0$, the equation contains transport and jumps only. A zero boundary density cannot be imposed universally; the boundary condition must respect the direction of transport.
For example, outflow under negative drift contributes $-b_z(t)p_z(t,0+)$ to the death density when the boundary trace exists.
For monotone damage, the direct representation in Remark~\ref{rem:monotoneidentification} can be used.

\section*{Data and code availability}
The source data used in this study are available from the public sources cited in the article. The reproducibility package, including analysis code, input data, preparation instructions, and reference results, is publicly available on Zenodo (version 1.0.0): \url{https://doi.org/10.5281/zenodo.23519845}. The author's software, accompanying documentation, and original reference results are provided under the MIT License. Third-party data retain their original providers' terms, as documented in the package.

\section*{Declaration of competing interests}
The author declares no competing interests.

\section*{Declaration of generative AI and AI-assisted technologies in the manuscript preparation process}
During the preparation of this work, the author used OpenAI Codex to assist with data acquisition, code development, English translation and editorial revision of the manuscript. The author takes full responsibility for the content of the article.

\section*{Acknowledgements}
This work is partially supported by JSPS KAKENHI Grant-in-Aid for Scientific Research (C) (Grant No.~24K06875) and Japan Science and Technology Agency CREST (Grant No.~JPMJCR2115).


\begin{thebibliography}{99}
\bibitem[Cairns et al.(2006)]{Cairns2006}
Cairns, A. J. G., Blake, D. and Dowd, K. (2006).
A two-factor model for stochastic mortality with parameter uncertainty: Theory and calibration.
\emph{Journal of Risk and Insurance}, \textbf{73}(4), 687--718.
\href{https://doi.org/10.1111/j.1539-6975.2006.00195.x}{doi:10.1111/j.1539-6975.2006.00195.x}.
\bibitem[Cairns et al.(2009)]{Cairns2009}
Cairns, A. J. G., Blake, D., Dowd, K., Coughlan, G. D., Epstein, D., Ong, A. and Balevich, I. (2009).
A quantitative comparison of stochastic mortality models using data from England and Wales and the United States.
\emph{North American Actuarial Journal}, \textbf{13}(1), 1--35.
\bibitem[Dimai(2025)]{Dimai2025}
Dimai, M. (2025).
Modeling and forecasting mortality with economic, environmental and lifestyle variables.
\emph{Decisions in Economics and Finance}, \textbf{48}, 1161--1195.
\href{https://doi.org/10.1007/s10203-024-00434-4}{doi:10.1007/s10203-024-00434-4}.
\bibitem[Gompertz(1825)]{Gompertz1825}
Gompertz, B. (1825).
On the nature of the function expressive of the law of human mortality, and on a new mode of determining the value of life contingencies.
\emph{Philosophical Transactions of the Royal Society of London}, \textbf{115}, 513--583.
\href{https://doi.org/10.1098/rstl.1825.0026}{doi:10.1098/rstl.1825.0026}.
\bibitem[Jaimungal et al.(2009)]{JKV2009}
Jaimungal, S., Kreinin, A. and Valov, A. (2009).
Integral Equations and the First Passage Time of Brownian Motions.
Preprint, arXiv:0902.2569v2.
\url{https://arxiv.org/abs/0902.2569}.
\bibitem[Mitsuda and Shimizu(2024)]{MS2024}
Mitsuda, D. and Shimizu, Y. (2024).
Mortality prediction using survival energy models with functional data analysis.
\emph{Japanese Journal of Statistics and Data Science}, \textbf{7}, 841--859.
\href{https://doi.org/10.1007/s42081-024-00245-2}{doi:10.1007/s42081-024-00245-2}.
\bibitem[Molini et al.(2011)]{Molini2011}
Molini, A., Talkner, P., Katul, G. G. and Porporato, A. (2011).
First passage time statistics of Brownian motion with purely time dependent drift and diffusion.
\emph{Physica A: Statistical Mechanics and its Applications}, \textbf{390}(11), 1841--1852.
\href{https://doi.org/10.1016/j.physa.2011.01.024}{doi:10.1016/j.physa.2011.01.024}.
Preprint version: \url{https://arxiv.org/abs/1009.0557} (v3).
\bibitem[Niu and Melenberg(2014)]{NiuMelenberg2014}
Niu, G. and Melenberg, B. (2014).
Trends in mortality decrease and economic growth.
\emph{Demography}, \textbf{51}(5), 1755--1773.
\href{https://doi.org/10.1007/s13524-014-0328-3}{doi:10.1007/s13524-014-0328-3}.
\bibitem[OECD(n.d.)]{OECDhealth}
OECD. (n.d.). OECD Health Statistics.
\url{https://www.oecd.org/en/data/datasets/oecd-health-statistics.html}.
Accessed 29 September 2026.
\bibitem[Shimizu et al.(2021)]{SMI2021}
Shimizu, Y., Minami, Y. and Ito, R. (2021).
Why does a human die? A structural approach to cohort-wise mortality prediction under survival energy hypothesis.
\emph{ASTIN Bulletin}, \textbf{51}(1), 191--219.
\href{https://doi.org/10.1017/asb.2020.32}{doi:10.1017/asb.2020.32}.
\bibitem[Shimizu et al.(2023)]{SSKMI2023}
Shimizu, Y., Shirai, K., Kojima, Y., Mitsuda, D. and Inoue, M. (2023).
Survival energy models for mortality prediction and future prospects.
\emph{ASTIN Bulletin}, \textbf{53}(2), 377--391.
\href{https://doi.org/10.1017/asb.2023.10}{doi:10.1017/asb.2023.10}.
\bibitem[Smith(2023)]{Smith2023}
Smith, P. L. (2023).
``Reliable organisms from unreliable components'' revisited: the linear drift, linear infinitesimal variance model of decision making.
\emph{Psychonomic Bulletin \& Review}, \textbf{30}, 1323--1359.
\href{https://doi.org/10.3758/s13423-022-02237-3}{doi:10.3758/s13423-022-02237-3}.
\bibitem[Villegas et al.(2018)]{Villegas2018}
Villegas, A. M., Kaishev, V. K. and Millossovich, P. (2018).
StMoMo: An R package for stochastic mortality modeling.
\emph{Journal of Statistical Software}, \textbf{84}(3), 1--38.
\href{https://doi.org/10.18637/jss.v084.i03}{doi:10.18637/jss.v084.i03}.
\bibitem[World Bank(n.d.-a)]{WBdata}
World Bank. (n.d.-a). World Development Indicators, indicator pages and Indicators API.
Codes for the six series examined here are listed in Example~\ref{ex:dataavailability}.
\url{https://data.worldbank.org/indicator/SH.MED.PHYS.ZS};
\url{https://data.worldbank.org/indicator/NY.GDP.PCAP.PP.KD};
\url{https://data.worldbank.org/indicator/EN.ATM.PM25.MC.M3};
\url{https://data.worldbank.org/indicator/SH.XPD.CHEX.PP.CD};
\url{https://data.worldbank.org/indicator/SL.UEM.TOTL.ZS};
\url{https://data.worldbank.org/indicator/SH.STA.SMSS.ZS}.
Japanese API series retrieved 29 September 2026.
\bibitem[World Bank(n.d.-b)]{WBexpanded}
World Bank. (n.d.-b). World Development Indicators: Urban population (\texttt{SP.URB.TOTL.IN.ZS}), CO$_2$ emissions per capita (\texttt{EN.GHG.CO2.PC.CE.AR5}) and PM2.5 air pollution (\texttt{EN.ATM.PM25.MC.M3}).
\url{https://data.worldbank.org/indicator/SP.URB.TOTL.IN.ZS};
\url{https://data.worldbank.org/indicator/EN.GHG.CO2.PC.CE.AR5};
\url{https://data.worldbank.org/indicator/EN.ATM.PM25.MC.M3}.
Japanese series retrieved 6 October 2026.
\bibitem[World Bank(n.d.-c)]{WDIgdp} World Bank. (n.d.-c). GDP per capita (constant 2015 US\$), Japan, NY.GDP.PCAP.KD. \url{https://data.worldbank.org/indicator/NY.GDP.PCAP.KD?locations=JP}.
\bibitem[Zhu et al.(2024)]{ZZW2024}
Zhu, X., Zhou, K. Q. and Wang, Z. (2024).
A new paradigm of mortality modeling via individual vitality dynamics.
arXiv:2407.15388v2.
\url{https://arxiv.org/html/2407.15388v2}.
\bibitem[Zhu et al.(2026)]{ZZW2026}
Zhu, X., Zhou, K. Q. and Wang, Z. (2026).
Mortality modeling via vitality: Model constructions and actuarial applications.
\emph{Insurance: Mathematics and Economics}, \textbf{126}, 103181.
\href{https://doi.org/10.1016/j.insmatheco.2025.103181}{doi:10.1016/j.insmatheco.2025.103181}.
\bibitem[Japan Meteorological Agency(n.d.)]{JMAdata}
Japan Meteorological Agency. (n.d.). Historical weather data download: User guide (in Japanese).
\url{https://www.data.jma.go.jp/risk/obsdl/top/help1.html}.
Accessed 29 September 2026.
\bibitem[MHLW(n.d.-a)]{MedExpenditure}
Ministry of Health, Labour and Welfare (MHLW). (n.d.-a). National medical care expenditure, fiscal year 2023 (in Japanese).
\url{https://www.mhlw.go.jp/toukei/saikin/hw/k-iryohi/23/dl/R05data.pdf}. Accessed 6 October 2026.
\bibitem[MHLW(n.d.-b)]{ESTATdoctors}
Ministry of Health, Labour and Welfare; e-Stat (MHLW). (n.d.-b). Statistics of Physicians, Dentists and Pharmacists, 2024: Physicians (in Japanese).
\url{https://www.e-stat.go.jp/stat-search/files?cycle=7&tclass=000001234962&year=20240}.
Accessed 29 September 2026.
\bibitem[MHLW(n.d.-c)]{NHNS}
Ministry of Health, Labour and Welfare (MHLW). (n.d.-c). National Health and Nutrition Survey (in Japanese).
\url{https://www.mhlw.go.jp/bunya/kenkou/kenkou_eiyou_chousa.html}.
Accessed 29 September 2026.
\bibitem[MHLW(n.d.-d)]{MHLWlong}
Ministry of Health, Labour and Welfare (MHLW). (n.d.-d). Survey of Physicians, Dentists and Pharmacists, 2000, Table 16: Historical series (in Japanese).
\url{https://www.mhlw.go.jp/toukei/saikin/hw/ishi/00/tou16.html}.
\bibitem[MHLW(n.d.-e)]{Phys2024series} Ministry of Health, Labour and Welfare (MHLW). (n.d.-e). Statistics of Physicians, Dentists and Pharmacists, 2024, Historical Table 3 (in Japanese). \url{https://www.e-stat.go.jp/stat-search/files?stat_infid=000040383721}.
\bibitem[MHLW(n.d.-f)]{WageSeries} Ministry of Health, Labour and Welfare (MHLW). (n.d.-f). Monthly Labour Survey, national long-term series: Annual percentage changes in real total cash earnings, establishments with 30 or more employees, all surveyed industries and employment types (in Japanese). \url{https://www.e-stat.go.jp/stat-search/files?stat_infid=000032189777}. Retrieved 29 September 2026.
\bibitem[MHLW(n.d.-g)]{WageCodes} Ministry of Health, Labour and Welfare (MHLW). (n.d.-g). Guide to symbols in the accumulated actual-value and index data (in Japanese). \url{https://www.e-stat.go.jp/stat-search/files?stat_infid=000032189778}.
\bibitem[MHLW(n.d.-h)]{WageMethods} Ministry of Health, Labour and Welfare (MHLW). (n.d.-h). Monthly Labour Survey: Revisions to indices and rates of change (in Japanese). \url{https://www.mhlw.go.jp/toukei/list/30-1c.html}.
\bibitem[IPSS(n.d.-a)]{JMDdata}
National Institute of Population and Social Security Research (IPSS). (n.d.-a). Japanese Mortality Database (in Japanese).
\url{https://www.ipss.go.jp/p-toukei/JMD/index.asp}.
Accessed 29 September 2026.
\bibitem[IPSS(n.d.-b)]{JMDnational}
National Institute of Population and Social Security Research (IPSS). (n.d.-b). Japanese Mortality Database, national data, version 005\_002 (1947--2024) (in Japanese).
\url{https://www.ipss.go.jp/p-toukei/JMD/00/}. Accessed 29 September 2026.
\bibitem[Statistics Bureau of Japan(n.d.-a)]{LFSlong}
Statistics Bureau of Japan. (n.d.-a). Labour Force Survey: Long-term time series (in Japanese).
\url{https://www.stat.go.jp/data/roudou/longtime/03roudou.html}.
\bibitem[Statistics Bureau of Japan(n.d.-b)]{LFSqa}
Statistics Bureau of Japan. (n.d.-b). Labour Force Survey: Questions and answers (in Japanese).
\url{https://www.stat.go.jp/data/roudou/qa-1.htm}.

\bibitem[Statistics Bureau of Japan(n.d.-c)]{LFSdownload} Statistics Bureau of Japan. (n.d.-c). Labour Force Survey, Long-term Table 1 a-1: Unadjusted monthly observations (in Japanese). \url{https://www.e-stat.go.jp/stat-search/files?stat_infid=000031831358}.
\bibitem[Cabinet Office of Japan(n.d.)]{CAOlong}
Cabinet Office of Japan. (n.d.). Long-term economic statistics: National accounts (in Japanese).
\url{https://www5.cao.go.jp/keizai3/2017/0118nk/n17_7_data01.html}.
\bibitem[Minami(1987)]{Minami1987}
Minami, Y. (1987). Mortality and morbidity rates based on stochastic models (in Japanese; title translated).
\emph{Bulletin of the Institute of Actuaries of Japan}, No. 40, Part 1, 31--53.
\end{thebibliography}
\end{document}